      \theoremstyle{plain}
      \newtheorem{assumption}{Assumption}
\newtheorem{theorem}{Theorem}
\newtheorem{definition}{Definition}
\newtheorem{problem}{Problem}
\newtheorem{lemma}{Lemma}
\newtheorem{remark}{Remark}
\newtheorem{condition}{Condition}
\newtheorem{corollary}{Corollary}
\begin{document}

\title{Safe Data-Driven Predictive Control}

\author{Amin~Vahidi-Moghaddam, Kaian~Chen, Kaixiang~Zhang, Zhaojian~Li$^*$, Yan~Wang, and Kai~Wu
%,~\IEEEmembership{Senior Member,~IEEE,}
\thanks{This work was supported by the Ford Motor Company under award number 000315-MSU0173.}
\thanks{$^*$Zhaojian Li is the corresponding author.}
\thanks{Amin Vahidi-Moghaddam, Kaian Chen, Kaixiang Zhang, and Zhaojian Li are with the Department of Mechanical Engineering, Michigan State University, East Lansing, MI 48824 USA (e-mail: vahidimo@msu.edu, chenkaia@msu.edu, zhangk64@msu.edu, lizhaoj1@egr.msu.edu).}
\thanks{Yan Wang and Kai Wu are with the Research and Advanced Engineering, Ford Motor Company, Dearborn, MI 48121 USA (e-mail: ywang21@ford.com, kwu41@ford.com).}
% <-this % stops a space
} 

\maketitle

% As a general rule, do not put math, special symbols or citations
% in the abstract or keywords.
\begin{abstract}
In the realm of control systems, model predictive control (MPC) has exhibited remarkable potential; however, its reliance on accurate models and substantial computational resources has hindered its broader application, especially within real-time nonlinear systems. This study presents an innovative control framework to enhance the practical viability of the MPC. The developed safe data-driven predictive control aims to eliminate the requirement for precise models and alleviate computational burdens in the nonlinear MPC (NMPC). This is achieved by learning both the system dynamics and the control policy, enabling efficient data-driven predictive control while ensuring system safety. The methodology involves a spatial temporal filter (STF)-based concurrent learning for system identification, a robust control barrier function (RCBF) to ensure the system safety amid model uncertainties, and a RCBF-based NMPC policy approximation. An online policy correction mechanism is also introduced to counteract performance degradation caused by the existing model uncertainties. Demonstrated through simulations on two applications, the proposed approach offers comparable performance to existing benchmarks with significantly reduced computational costs.
\end{abstract}

\textbf{Note to Practitioners}. Model predictive control (MPC) implementation causes high computational cost due to solving an online optimization problem at each time step. To prevail over this challenge, one prevalent approach involves the utilization of model-reduction techniques to simplify the system dynamics \cite{zamani2024data}. However, even after the model reduction, the computational burden often remains substantial. Another sound approach involves the implementation of function approximators, such as the NNs \cite{krishnamoorthy2021adaptive} and the GPR \cite{arcari2020meta}, to learn the MPC policy. However, the trained controllers suffer from several shortcomings: i) Unlike the MPC that ensures system safety, trained controllers offer no assurance on safety, and ii) The trained controllers inevitably cause performance degradation due to the model uncertainties. These issues have significantly hindered the adoption of such controllers in real-world engineering applications.

\begin{IEEEkeywords}
 Nonlinear Model Predictive Control; Robust Control Barrier Function; Spatial Temporal Filters; Concurrent Learning; Nonlinear System Identification.
\end{IEEEkeywords}

\IEEEpeerreviewmaketitle

\section{Introduction}
\label{Sec1}
Model predictive control (MPC) provides an optimal control policy with system safety guarantees \cite{seel2023variance,rosolia2023model,zhao2023efficient}. However, the MPC relies on an accurate dynamical model, which is challenging to obtain, especially for nonlinear and complex systems \cite{zhong2023robust,cao2020neural,lopez2019dynamic}. Towards that end, in our prior work \cite{chen2020online}, we have developed a nonlinear system identification that employs spatial-temporal filters (STFs) to represent a nonlinear system as a composite local model structure. Compared to black-box models, based on neural network (NN) \cite{mehrabi2024adaptive,talebzadeh2024deep} and Gaussian process regression (GPR) \cite{coleman2023exploration,ashtiani2024reconstructing}, the STF-based composite local model structure has simpler form with greater interpretability, and the resultant MPC shows a reasonable performance \cite{chen2022stochastic}. Despite promising empirical results, the STF needs to satisfy the persistence of excitation (PE) condition, which plays an inevitable role for the performance of the trained models and will be treated in this paper.

Moreover, control barrier function (CBF) has been introduced to satisfy the system safety constraints such that forward invariance of a safety set is guaranteed using the system dynamics \cite{ames2019control,chen2020safety,dean2020guaranteeing}. Adaptive CBF (ACBF) and robust CBF (RCBF) frameworks have been introduced against system
uncertainties, i.e., state estimation error, model mismatch, and unknown disturbance \cite{taylor2020adaptive,lopez2020robust,garg2021robust,black2021fixed,isaly2021adaptive}. On the other hand, to minimize the performance loss caused by the control learning
error, an online adaptive control policy has been proposed for the MPC policy learning \cite{krishnamoorthy2021adaptive}. However, a holistic treatment of the model uncertainties – necessary for the real-world engineering systems – has not been developed.

In this work, we introduce a safe data-driven predictive control to efficiently track a desired reference trajectory for general nonlinear systems. Specifically, we propose a discrete-time STF-based concurrent learning to identify the system dynamics. Compared to noise-injection PE satisfaction schemes, the concurrent learning uses a memory of past data to satisfy a rank condition which is easy to check online \cite{chowdhary2010concurrent}. An extended RCBF scheme is further developed to guarantee the system safety by systematically considering all model uncertainties. The nonlinear MPC (NMPC) is employed to incorporate the composite local model structure and the RCBF constraint, and the STF function approximator is used one more time to learn the NMPC policy. Finally, a policy correction scheme is proposed for efficient online implementation.

\begin{table}[!ht]
\centering
 \caption{Abbreviations}
\begin{tabular}{ |p{2cm}|p{6cm}| }
\hline\hline
Abbreviations & Meaning \\
\hline
MPC & Model Predictive Control \\\hline
NMPC & Nonlinear Model Predictive Control \\\hline
STF & Spatial-Temporal Filter \\\hline
NN & Neural Network \\\hline
GPR & Gaussian Process Regression \\\hline
PE & Persistence of Excitation \\\hline
CBF & Control Barrier Function \\\hline
RCBF & Robust Control Barrier Function \\\hline
ACBF & Adaptive Control Barrier Function \\\hline
CL & Concurrent Learning \\\hline
RLS & Recursive Least Squares \\\hline
UUB & Uniformly Ultimately Bounded \\\hline
QP & Quadratic Programming \\\hline
\hline
\end{tabular}
\end{table}

The contributions of this paper are pointed out as follows. First, the proposed STF-based concurrent learning handles both structured and unstructured uncertainties as well as unknown external disturbances without requiring derivatives of system states or filter regressors to remove the PE condition compared to \cite{chowdhary2010concurrent}. Second, the developed RCBF guarantees the system safety in the presence of not only the system identification error and the external disturbance but also the control learning error compared to \cite{taylor2020adaptive,lopez2020robust,garg2021robust,black2021fixed,isaly2021adaptive}. Third, an online adaptive control policy, including a KKT adaptation and a feedback control, is proposed to treat the performance loss because of the model uncertainties such that it keeps the real trajectory around the ideal nominal trajectory. Last but not least, the efficacy of the developed control synthesis is illustrated in applications of cart-inverted pendulum and automotive powertrain control. Table I represents the used abbreviations in this paper.

%The outline of this paper is as follows. The problem formulation and preliminaries on the MPC and the CBF are provided in Section~II. Section~III presents the developed online safe data-driven predictive control framework. Section~IV presents simulation results on a cart inverted pendulum and an automotive powertrain control. Finally, conclusions are drawn in Section~V.

%\textbf{Notations}. Throughout the paper, the following notations are adopted. $\mathbb{R}^n$ and $\mathbb{R}^{n\times m}$ denote the set of $n$-dimensional real vectors and the set of $n\times m$-dimensional real matrices, respectively. $A>0$ denotes that the matrix $A$ is positive definite. $x^T$ and $A^T$ denote the transpose of the vector $x$ and the matrix $A$, respectively. $\lVert . \rVert$ represents the Euclidean norm of a vector or the induced $2$-norm of a matrix. $\lambda_{min} (A)$ and $\lambda_{max} (A)$ denote the minimum eigenvalue and the maximum eigenvalue of $A$, respectively. The identity matrix with $n$-dimension is denoted by $I_n$. \textcolor{blue}{Moreover, Table I represents the used abbreviations in this paper.}

\section{Preliminaries and Problem Formulation}
\label{Sec2}
In this section, we first review the foundational concepts of model predictive control (MPC) and control barrier function (CBF). It subsequently outlines the challenge of safe data-driven predictive control, specifically targeting nonlinear discrete-time systems.

\subsection{Model Predictive Control}
Consider a class of nonlinear discrete-time system that has the following form: 
\begin{equation}
    \label{system}
    x(k+1) = f(x(k),u(k))+w(k),\quad y(k) = g(x(k)),\\
\end{equation}
where $k \in \mathbb{N}^+$ is the time step, $x \in \mathbb{R}^n$ denotes the state vector, $u \in {\mathbb{R}^m}$ represents the control input, $w \in \mathbb{R}^n$ is an unknown external disturbance, and $y \in \mathbb{R}^l$ denotes the output of the system. Moreover, $f:\mathbb{R}^n\times \mathbb{R}^m \rightarrow \mathbb{R}^n$ is the system dynamics, and $g:\mathbb{R}^n \rightarrow \mathbb{R}^l$ represents the output dynamics. Note that the system states $x$ are not measurable.

Now, consider the state $x$ and the control input $u$ under the following constraints:
\begin{equation}
    \label{Constraints}
    u(k) \in \mathbb{U} \subset \mathbb{R}^m,
     \quad x(k) \in \mathbb{X} \subset \mathbb{R}^n,
\end{equation}
%where the control input constraints are generic subsets of $R^m$, and the state constraints are a set of nonlinear inequalities, such as box, ellipsoidal, and polytopic constraints.

%Now, the following definition describes the closed-loop performance for the nonlinear system \eqref{system}.

\begin{definition}[Closed-Loop Performance]
\label{def1}
Consider the nonlinear system \eqref{system} and a control problem of tracking a reference trajectory $r$ by the output $y$. Starting from an initial state $x_0$, the closed-loop system performance over $N$ steps is characterized by the following cost term:
\begin{equation}
    \label{Cost}
    \begin{aligned}
  &J_N(\bf{x},\bf{u}) \\&= \sum^{N-1}_{k=0} \phi(x(k),u(k),y(k),r(k)) + \psi(x(N),y(N),r(N)),
  \end{aligned}
\end{equation}
where $\mathbf{u} = \left[ u(0),\, u(1),\, \cdots,\, u(N-1)  \right]$, $\mathbf{x} = \left[ x(0),\, x(1),\, \cdots,\, x(N)  \right]$, and $\phi(x,u,y,r)$ and $\psi(x,y,r)$ respectively represent the stage and terminal costs that take the following forms:
\begin{equation}
  \begin{aligned}
    \label{Cost format}
    & \phi(x,u,y,r) = x^{T} Q x + u^{T} R u + (y-r)^{T} P (y-r), \\
    & \psi(x,y,r) = x^{T} Q_N x + (y-r)^{T} P_N (y-r),
  \end{aligned}
\end{equation}
where $Q$, $R$, $P$, $Q_N$, and $P_N$ are positive-definite matrices of appropriate dimensions.
\end{definition}

The MPC aims at optimizing the closed-loop performance \eqref{Cost} while adhering to the constraints \eqref{system} and \eqref{Constraints}. However, in practice, the real nonlinear system \eqref{system} may not be available; thus, system identification algorithms are typically used to achieve an identified (nominal) model as:
\begin{equation}
    \label{nominal model}
     \hat{x}(k+1) = \hat{f}(\hat{x}(k), u(k)),\quad
     \hat{y}(k) = g(\hat{x}(k)),
\end{equation}
where $\hat{x}$, $\hat{y}$, and $\hat{f}$ denote the  states,  outputs, and dynamics of the identified model, respectively.

\begin{comment}
\begin{remark} [Nominal Control]
\label{Nominal Control}
One can apply the designed nominal control $u_{nm}$ to the real system \eqref{system}, i.e. $u=u_{nm}$; however, it may cause a poor performance for the real system due to the system identification error and the unknown disturbance. Thus, $u_{nm}$ should be modified to use in the real system. 
\end{remark}
\end{comment}

Therefore, at each time step $k$, the MPC uses a constrained optimization problem as follows:
\begin{equation}
  \begin{aligned}
    \label{NMPC}
    &\qquad\qquad(\mathbf{x}^{*},\mathbf{u}^{*}) = \underset{\mathbf{\hat{x}},\mathbf{u}}{\arg\min} \hspace{1 mm} J_N(\mathbf{\hat{x}},\mathbf{u})\\
    &s.t.\quad \hat{x}(k+1) = \hat{f}(\hat{x}(k),u(k)),\quad \hat{y}(k) = g(\hat{x}(k))\\
    & \hspace{8.5 mm} \hat{x}(0) = x(k),\quad u(k) \in U, \quad \hat{x}(k) \in X,
  \end{aligned}
\end{equation}
%where    $\mathbf{u}^{*} = \left[ u^{*}_0, u^{*}_1, ..., u^{*}_{N-1}  \right]$ and $\;    \mathbf{x}^{*} = \left[ x^{*}_0, x^{*}_1, ..., x^{*}_N  \right]$.
In the standard MPC, the first optimal control $u^*(0)$ is put into action, making the system to progress by one step, after which the sequence restarts. Despite its wide-ranging accomplishments in various domains, the MPC does encounter certain persistent challenges. These challenges encompass substantial computational overhead, particularly concerning nonlinear systems, and the need to ensure robustness against the model uncertainties. These issues will be elaborated upon in the forthcoming Section~III.

\subsection{Control Barrier Function}
%In this subsection, the preliminaries of the discrete-time CBF is presented, which can be used for safety-critical controls.
Contemplate a closed set denoted as $S$, which finds its definition as sublevel set of a continuously differentiable function $h: \mathbb{X} \subset \mathbb{R}^n \rightarrow \mathbb{R}$:
\begin{equation}
\begin{aligned}
    \label{Safe set}
  & S = \{x \in \mathbb{R}^n: h(x) \leq 0\},\\
  & \partial S = \{x \in \mathbb{R}^n: h(x) = 0\},\\
  & Int(S) = \{x \in \mathbb{R}^n: h(x) < 0\}.
  \end{aligned}
\end{equation}
where $S$ is a safe set for the system, i.e., is forward invariant, if $h(x)$ is a CBF as:
\begin{equation}
\begin{aligned}
    \label{CBF condition}
  \triangle h(x(k)) \leq -\gamma h(x(k)),\quad 0 < \gamma \leq 1,
  \end{aligned}
\end{equation}
where $\triangle h(x(k)):=h(x(k+1))-h(x(k))$, and $\gamma$ denotes a design parameter. Based on \eqref{CBF condition}, the upper bound of the CBF exhibits exponential decay at a rate of $1-\gamma$ as:
\begin{equation}
\begin{aligned}
    \label{CBF condition2}
  h(x(k+1)) \leq (1-\gamma) h(x(k)).
  \end{aligned}
\end{equation}

The CBF is widely used in engineering systems to avoid unsafe regions and guarantee the system safety; however, robustness to the model uncertainties must be addressed, which will be treated in Section III-B.

\subsection{Problem Statement}
The nonlinear MPC (NMPC) \eqref{NMPC} poses two major challenges. First, solving a nonlinear optimization problem at each time step incurs heavy computations for the real system. Second, the control performance highly relies on the accuracy of the identified model \eqref{nominal model}. Thus, we aim at developing a safe data-driven predictive control framework to address the mentioned challenges. Specifically, to overcome the first issue, we learn the NMPC policy using a function approximator such that we have $\tilde{u}\approx\pi_{\text{MPC}}$, where $\tilde{u}$ and $\pi_{\text{MPC}}$ represent the NMPC policy approximation and the NMPC policy, respectively. To address the second issue, we propose an online policy correction to robustly handle the model uncertainties. More specifically, from \eqref{system}, \eqref{nominal model}, and $\tilde{u}$, the following drift dynamics errors are defined:
\begin{equation}
  \begin{aligned}
    \label{identification error}
    &e_{s}(k) = \hat{f}(\hat{x}(k),u(k)) - f(x(k),u(k)),\\ &e_{c}(k) = \hat{f}(\hat{x}(k),\tilde{u}(k)) - \hat{f}(\hat{x}(k),u(k)),
    \end{aligned}
\end{equation}
where $e_{c}$ and $e_{s}$ represent the drift dynamics errors due to the imperfections on the NMPC policy learning for the identified model \eqref{nominal model} and the system identification for the nonlinear system \eqref{system}, respectively. As a result, one can express the real system \eqref{system} as:
\begin{equation}
  \begin{aligned}
    \label{disturbed system}
    x(k+1) = \hat{f}(\hat{x}(k),\tilde{u}(k)) - e_{c}(k) - e_{s}(k) + w(k).
  \end{aligned}
\end{equation}

Note that due to the system identification error $e_{s}$ and the unknown disturbance $w$, the NMPC \eqref{NMPC} may not optimize the closed-loop performance \eqref{Cost} for the real system \eqref{system}. Moreover, the control policy learning error $e_{c}$ further exacerbates the complexity to achieve a satisfactory performance for the real system. Therefore, the approximated control policy $\tilde{u}$ is adapted online to mitigate the performance loss caused by $e_{c}$, $e_{s}$, and $w$. The objective is stated as follows.

\begin{problem} [Safe Data-Driven Predictive Control]
\label{prob1}
Consider the nonlinear system \eqref{disturbed system} with the safety constraint \eqref{CBF condition2}. Design a safe data-driven predictive control to achieve the following properties:

\hspace{-4.75 mm} i) For the offline part, $e_{s}$ and $e_{c}$ converge to zero if $w=0$ and to a bounded region around zero if $w \neq 0$.

\hspace{-5 mm} ii) For the online part, the safety constraint \eqref{CBF condition2} is guaranteed for the real system.

\hspace{-5.25 mm} iii) For the online part, the appeared performance loss caused by $e_{c}$, $e_{s}$, and $w$ is mitigated.
\end{problem}

%To solve Problem \ref{prob1}, the following assumption and definition will be used in the proposed approach:

\begin{assumption} [Bounded Terms]
\label{Bounded Terms}
1) $f(x,u)$ is bounded for bounded inputs, 2) There exists $\varepsilon_w> 0$ such that $\|w(k)\| \leq \varepsilon_w$, and 3) $\eta$ is the Lipschitz constant of $h(x)$,i.e.,
\begin{equation}
  \begin{aligned}
    \label{DT Lipschitz constant}
  \left| h(x)-h(\hat{x}) \right| \leq \eta \left\| x-\hat{x} \right\|.  
  \end{aligned}
\end{equation}
\end{assumption}

\begin{figure}[!h]
     \centering
     \includegraphics[width=8.2cm, height=3.9cm]{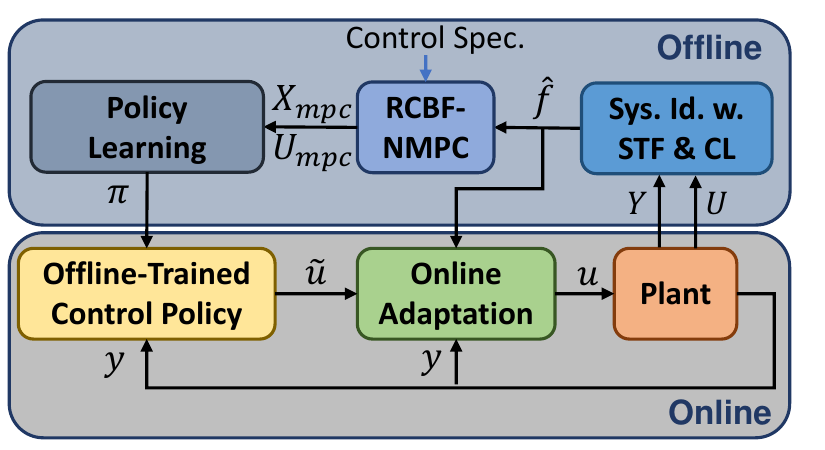}
     \caption{Sequence of steps of online safe data-driven predictive control.}
     \label{Flowchart}
\end{figure}

\section{Main Results}
\label{Sec3}
The developed safe data-driven predictive control pipeline is delineated in Fig.~\ref{Flowchart}. Commencing with amalgamation of discrete-time concurrent learning and spatial temporal filter (STF) \cite{chen2020online}, the primary objective is to attain identification of nonlinear systems under relaxed PE condition. This system identification is subsequently used for implementation of robust CBF (RCBF)-based NMPC to optimize the closed-loop performance while ensuring the system safety against the model uncertainties. Moreover, to enable a computationally efficient online deployment, a policy approximation is trained to emulate the RCBF-based NMPC policy, further aided by an online policy correction that curbs performance deterioration resulting from the existing model uncertainties. The comprehensive constituents of this framework are elaborated upon in the subsequent subsections.

\subsection{STF-based Concurrent Learning}
In this subsection, we present a nonlinear system identification to train the identified model \eqref{nominal model} and enable the NMPC design. Specifically, we identify a nonlinear autoregressive exogenous model (NARX) using the input-output data from the real system \eqref{system} as follows:
\begin{equation}
  \begin{aligned}
    \label{NARX Model}
    & y(k) = G(U_{d}(k),Y_{d}(k)), \\ 
    & U_{d}(k) = [u^{T}(k-1),u^{T}(k-2),\ldots ,u^{T}(k-d_{u})], \\ 
    & Y_{d}(k) = [y^{T}(k-1),y^{T}(k-2),\ldots ,y^{T}(k-d_{y})].
  \end{aligned}
\end{equation}
where $d_{u}$ signifies input delay, $d_{y}$ indicates output delay, and $G$ represents a nonlinear prediction function.

Towards that end, we follow our prior work on STF-based system identification that leverages evolving clustering and recursive least squares (RLS) to systematically decompose the nonlinear system into multiple local models and simultaneously identify the validity zone of each local model   \cite{chen2020online}. Specifically, let $U_{stf}(k) = [U_{d}(k),Y_{d}(k)]^{T}$, the nonlinear model \eqref{NARX Model} is expressed as a composite local model structure, where each local model has a certain valid operating regime, in the following form:
\begin{equation}
\begin{aligned}
\label{Composite Structure}
& y(k) = F(U_{stf}(k),\omega(k);\Phi ,\Psi) \\
& \hspace{6.6 mm}    =\sum\limits_{i=1}^{L} {\alpha }_{i}({U}_{stf}(k);\phi_i,\psi_i) {f}_{i}({U}_{stf}(k),\omega_i(k);{\phi }_{i}),
\end{aligned}
\end{equation}
where ${f}_{i}({U}_{stf}(k),\omega_i(k);{\phi }_{i})$ corresponds to the $i$th local model and can be defined as neural network, Markov chain, linear model, a point, etc. The local models are parameterized by $\phi_{i}$ and account for unstructured uncertainty $\omega_i(k)$. The weighting functions ${\alpha }_{i}({U}_{stf}(k);\phi_i,\psi_i)$ facilitate local model interpolation and are parameterized by $\psi_{i}$. The weighting functions are based on a dissimilarity metric that combines clustering and local model prediction errors. $L$ is the number of local models, and  $\Phi=[{\phi }_{1},{\phi }_{2},\ldots,{\phi }_{L}]$, $\Psi=[{\psi }_{1},{\psi }_{2},\ldots,{\psi }_{L}]$, and $\omega(k)=[w_{1}(k),w_{2}(k),\ldots,w_{L}(k)]$ are the collection of local model parameters, local interpolating function parameters, and local unknown disturbances, respectively.

In particular, we consider linear local models and a softmax-like interpolation function as: 
\begin{equation}
  \begin{aligned}
    \label{local models}
  &{f}_{i}({U}_{stf}(k),\omega_i(k);{\phi }_{i}) = {A}_{i}{U}_{stf}(k)+{b}_{i}+\omega_i(k), \\ 
  &{\alpha }_{i}({U}_{stf}(k);\phi_i,\psi_i) = %\frac{{{\mu }_{i}}({U}_{stf}(k),{\psi }_{i})}{\sum\limits_{j=1}^{K}{{\mu }_{j}({U}_{stf}(k),{\psi }_{j})}}. 
  \frac{\exp(-D_i({U}_{stf}(k);\phi_i,\psi_i))}{\sum \limits_{j=1}^{L}{\exp (-D_j({U}_{stf}(k);\phi_i,\psi_i))}}.
  \end{aligned}
\end{equation}
$A_i \in \mathbb{R}^{n_y \times n_U}$ and $b_i \in \mathbb{R}^{n_y}$ represent the local model parameters $\phi_{i}$, and $\omega_i(k) \in \mathbb{R}^{n_y}$ denotes the unstructured uncertainty (e.g. unknown disturbance) for each local model. Moreover, ${\alpha }_{i}({U}_{stf}(k);\phi_i,\psi_i)$ represents the softmax function, and $D_i({U}_{stf}(k);\phi_i,\psi_i)$ is the dissimilarity metric, including the model residual and a Mahalanobis distance \cite{chen2020online}. The STF capitalizes on evolving clusters with ellipsoidal forms as the foundation for local model interpolation. Each ${f}_{i}$ is associated with an evolving cluster such that the clusters and the local model parameters are updated simultaneously. The readers are referred to \cite{chen2020online} for more details about the STF.\footnote{A video of the simulation result on the STF-Idnetifier can be found online at https://www.youtube.com/watch?v=UYZiNC1LJwM\&t=12s.}

The standard STF uses RLS to update the local model parameters $\{A_i,b_i\}$, which requires fulfilling the persistence of excitation (PE) condition for the system identification. However, the PE condition inevitably increases the complexity and may not be applicable in systems where the control inputs are not completely programmable. Therefore, we propose a discrete-time STF-based concurrent learning to relax the PE condition using a rank condition that is convenient for online inspection and implementations.  

Specifically, using \eqref{Composite Structure} and \eqref{local models}, one has
\begin{equation}
  \begin{aligned}
    \label{Composite Structure b}
  &y(k)=\sum\limits_{i=1}^{L} {\alpha }_{i}({U}_{stf}(k);{\psi }_{i}) ({A}_{i}{U}_{stf}(k)+{b}_{i}+\omega_i(k)),
  \end{aligned}
\end{equation}
where can be rewritten in the following form:
\begin{equation}
  \begin{aligned}
    \label{Regressor Format}
  &y(k)=\Phi \zeta({U}_{stf}(k),\alpha (k))+\omega(k), 
  \end{aligned}
\end{equation}
where $ \Phi = \left[A_1, b_1,\ldots, A_L, b_L  \right]\in \mathbb{R}^{n_y \times q}$, $\zeta({U}_{stf},\alpha) = \left[\alpha_1 {U}^{T}_{stf},\, \alpha_1, \ldots, \alpha_L {U}^{T}_{stf},\, \alpha_L \right]^T\in \mathbb{R}^{q}$, and $q=L(n_U+1)$. Moreover, the total unknown disturbance $\omega$ represents both approximation error, due to the simplification of the original nonlinear system (1) as a composite local model structure (17), and the unknown external disturbance $w(k)$ in the original nonlinear system (1). We assumed a bounded external disturbance $w(k)$ in Assumption 1, and to make a bounded approximation error, one needs to consider a large enough value for the number of local models $L$. $\omega$ is given as $\omega=\alpha_1 \omega_1+ \ldots+ \alpha_L \omega_L \in \mathbb{R}^{n_y}$, where $\|\omega(k)\| \leq \varepsilon$ and $\varepsilon=\alpha_1 \varepsilon_1+ \ldots+ \alpha_L \varepsilon_L$ with $\varepsilon_i$ being the bound of $\omega_i$, i.e.,  $\|\omega_i(k)\| \leq \varepsilon_i$.

Therefore, the identified model is
\begin{equation}
  \begin{aligned}
    \label{Regressor Identified Model}
  &\hat{y}(k)=\hat{\Phi}(k) \zeta({U}_{stf}(k),\alpha (k)), 
  \end{aligned}
\end{equation}
where $\hat{\Phi} = \left[\hat{A}_1, \hat{b}_1, \ldots, \hat{A}_L, \hat{b}_L  \right]$. Now, the system identification error is given as:
\begin{equation}
  \begin{aligned}
    \label{System Identification Error}
  &{e}_{si}(k) = \hat{y}(k) - y(k)\\
  & \hspace{8.75 mm}= \tilde{\Phi}(k) \zeta({U}_{stf}(k),\alpha (k))-\omega(k),
  \end{aligned}
\end{equation}
where $\tilde{\Phi}(k) = \hat{\Phi}(k) - \Phi$ denotes the parameter identification errors. Note that since the output $y$ is measurable, we can measure the system identification error ${e}_{si}$.

The concurrent learning uses a memory of past data as:
\begin{equation}
  \begin{aligned}
    \label{Recorded Data}
  &Z = \left[\zeta({U}_{stf}(k_1),\alpha (k_1)), \ldots, \zeta({U}_{stf}(k_s),\alpha (k_s)) \right],
  \end{aligned}
\end{equation}
where $k_1, \ldots, k_s$ represent the historical time steps of recorded data in the past, and $s$ signifies the number of stored data. For the present time step $k$, the system identification error corresponding to the $j$th sample $\zeta({U}_{stf}(k_j),\alpha (k_j))$ is denoted as follows:
\begin{equation}
  \begin{aligned}
    \label{System Identification Error j}
  &{e}_{si}(k_j) = \tilde{\Phi}(k) \zeta({U}_{stf}(k_j),\alpha (k_j))-\omega(k_j), \hspace{1 mm} j = 1, 2, \ldots, s,  
  \end{aligned}
\end{equation}
where $\tilde{\Phi}(k)$ represents the parameter identification error at the present time step. Then, using a normalizing signal $m(k)=\sqrt[]{\varrho+\zeta(k)^T \zeta(k)}, \varrho>0$, one has
\begin{equation}
  \begin{aligned}
    \label{Normalized System Identification Error}
  \bar{e}_{si}(k) &= \tilde{\Phi}(k) \bar{\zeta}({U}_{stf}(k),\alpha (k))-\bar{\omega}(k),\\
  \bar{e}_{si}(k_j) &= \tilde{\Phi}(k) \bar{\zeta}({U}_{stf}(k_j),\alpha (k_j))-\bar{\omega}(k_j),
  \end{aligned}
\end{equation}
where $\bar{\zeta}(k)=\frac{\zeta(k)}{m(k)}$, $\bar{\omega}(k)=\frac{\omega(k)}{m(k)}$, $\bar{\zeta}(k_j)=\frac{\zeta(k_j)}{m(k_j)}$, and $\bar{\omega}(k_j)=\frac{\omega(k_j)}{m(k_j)}$.

\begin{condition} [Rank Condition]
\label{Rank Condition}
The rank of $Z$ \eqref{Recorded Data} coincides with the dimension of $\zeta({U}_{stf},\alpha)$; i.e., $rank(Z)=q$.
\end{condition}

Using Condition \ref{Rank Condition}, it is obvious that the following inequalities hold:
\begin{equation}
  \begin{aligned}
    \label{Positive Matrix}
  H_1=\sum\limits_{j=1}^{s} \bar{\zeta}({U}_{stf}(k_j),\alpha (k_j)) \bar{\zeta}^T({U}_{stf}(k_j),\alpha (k_j)) > 0, 
  \end{aligned}
\end{equation}
\begin{equation}
  \begin{aligned}
    \label{Positive Matrix 2}
  H_2=\bar{\zeta}({U}_{stf}(k),\alpha (k)) \bar{\zeta}^T({U}_{stf}(k),\alpha (k)) + H_1 >0. 
  \end{aligned}
\end{equation}

Now, using \eqref{Positive Matrix} and \eqref{Positive Matrix 2}, the STF-based concurrent learning law is proposed for the nonlinear system identification in the following theorem.

\begin{theorem} [Discrete-Time STF-based Concurrent Learning]
\label{Discrete-Time Concurrent Learning}
Suppose Condition \ref{Rank Condition} is satisfied. Consider the nonlinear system \eqref{Regressor Format} and the identified model \eqref{Regressor Identified Model}. Then, the discrete-time concurrent learning law
\begin{equation}
  \begin{aligned}
    \label{Update Law}
  &\hat{\Phi}(k+1)=\hat{\Phi}(k) - \bar{e}_{si}(k) \bar{\zeta}^T({U}_{stf}(k),\alpha (k)) \hspace{0.5 mm} \Omega\\
  & \hspace{16 mm} - \sum\limits_{j=1}^{s} \bar{e}_{si}(k_j) \bar{\zeta}^T({U}_{stf}(k_j),\alpha (k_j)) \hspace{0.5 mm} \Omega
  \end{aligned}
\end{equation}
with the learning rate matrix $\Omega=r I_q$,
\begin{equation}
  \begin{aligned}
    \label{Learning rate}
  &0<r<\frac{2\lambda_{min}(H_2)}{\lambda_{max}^2(H_2)}
  \end{aligned}
\end{equation}
guarantees that\\
i) $e_{si}$ converges to zero when $\omega(k)=0$.\\
ii) $e_{si}$ is uniformly ultimately bounded (UUB) when $\omega(k) \neq 0$.
\end{theorem}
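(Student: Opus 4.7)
The plan is to use a quadratic Lyapunov function on the parameter error $\tilde{\Phi}$ and analyse its one-step increment under the update law. I set $V(k)=\operatorname{tr}(\tilde{\Phi}^T(k)\tilde{\Phi}(k))$ and subtract $\Phi$ from both sides of \eqref{Update Law} to obtain the error recursion $\tilde{\Phi}(k+1)=\tilde{\Phi}(k)-rE(k)$, where $E(k)=\bar{e}_{si}(k)\bar{\zeta}^T(k)+\sum_{j=1}^{s}\bar{e}_{si}(k_j)\bar{\zeta}^T(k_j)$. Substituting the normalized-error identities from \eqref{Normalized System Identification Error} and collecting terms against the regressor Gram matrix yields the clean form $E(k)=\tilde{\Phi}(k)H_2-W(k)$ with $W(k):=\bar{\omega}(k)\bar{\zeta}^T(k)+\sum_{j=1}^{s}\bar{\omega}(k_j)\bar{\zeta}^T(k_j)$, so $\tilde{\Phi}(k+1)=\tilde{\Phi}(k)(I_q-rH_2)+rW(k)$.

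Expanding $V(k+1)$ using the cyclic property of the trace and the symmetry of $H_2$ then gives
\begin{equation*}
\Delta V(k) = -2r\operatorname{tr}(H_2\tilde{\Phi}^T\tilde{\Phi}) + r^2\operatorname{tr}(H_2^2\tilde{\Phi}^T\tilde{\Phi}) + 2r\operatorname{tr}((I_q-rH_2)\tilde{\Phi}^T W) + r^2\operatorname{tr}(W^TW).
\end{equation*}
For claim (i), set $\omega\equiv 0$ so that $W\equiv 0$. Rayleigh-style bounds give $\operatorname{tr}(H_2\tilde{\Phi}^T\tilde{\Phi})\ge \lambda_{min}(H_2)V$ and $\operatorname{tr}(H_2^2\tilde{\Phi}^T\tilde{\Phi})\le \lambda_{max}^2(H_2)V$, so $\Delta V\le -r(2\lambda_{min}(H_2)-r\lambda_{max}^2(H_2))V$. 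Under the learning-rate bound \eqref{Learning rate} the coefficient is strictly negative, so $V$ decays geometrically to zero, hence $\tilde{\Phi}(k)\to 0$ and, by \eqref{System Identification Error}, $e_{si}(k)\to 0$.

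For claim (ii), I apply Young's inequality to the cross term $2r\operatorname{tr}((I_q-rH_2)\tilde{\Phi}^TW)$, splitting it into a fraction $\delta V$ absorbed into the negative-definite part and a residue proportional to $\operatorname{tr}(W^TW)$. By construction of the normalizing signal one has $\|\bar{\zeta}(k)\|\le 1$ and $\|\bar{\omega}(k)\|\le \varepsilon/\sqrt{\varrho}$, so $\|W\|_F$ is uniformly bounded by a constant depending on $\varepsilon$, $\varrho$, and $s$. The recursion therefore takes the standard UUB form $\Delta V\le -\alpha V+\beta$ with $\alpha,\beta>0$, and iterating gives an ultimate bound on $V$, hence on $\tilde{\Phi}$, and finally on $e_{si}$ via \eqref{Normalized System Identification Error}. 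The main obstacle is choosing Young's parameter so that $\alpha$ stays strictly positive under the same learning-rate bound used in part (i) while keeping $\beta$ in a transparent closed form in terms of $\varepsilon$ and the spectral bounds of $H_2$, together with ensuring uniform-in-$k$ bounds on $\lambda_{min}(H_2(k))$ and $\lambda_{max}(H_2(k))$ — inherited from Condition~\ref{Rank Condition} and the bounded regressor — so that the pointwise estimates aggregate into a single time-independent UUB statement.
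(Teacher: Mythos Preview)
Your proposal is correct and follows essentially the same route as the paper: a quadratic Lyapunov function in $\tilde{\Phi}$, the error recursion $\tilde{\Phi}(k+1)=\tilde{\Phi}(k)(I_q-rH_2)+rW(k)$ (the paper writes $W$ as $H_3$), and spectral bounds on $H_2$ to extract a strictly negative coefficient under \eqref{Learning rate}. The only cosmetic difference is in part~(ii): the paper bounds $\Delta V$ by a quadratic $Q_1\|\tilde{\Phi}\|^2+Q_2\|\tilde{\Phi}\|+Q_3$ and reads off the ultimate bound as the positive root $\tilde{\Phi}_b$, whereas you use Young's inequality to reach the $-\alpha V+\beta$ form; both are standard and yield the same UUB conclusion.
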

\begin{proof}
Consider the subsequent Lyapunov function candidate:
\begin{equation}
  \begin{aligned}
    \label{Lyapunov Candidate}
  &V(k) = tr \{\tilde{\Phi}(k) \Omega^{-1} \tilde{\Phi}^T(k)\},  
  \end{aligned}
\end{equation}
where $\tilde{\Phi}(k)$ and $\Omega=r I_q$ are the parameter identification errors and the learning rate matrix defined in \eqref{System Identification Error} and \eqref{Learning rate}, respectively. Therefore, one has
\begin{equation}
  \begin{aligned}
    \label{Lyapunov Candidate Derivative}
  &V(k+1)-V(k)\\ 
  &= tr\{\tilde{\Phi}(k+1) \Omega^{-1} \tilde{\Phi}^T(k+1)- \tilde{\Phi}(k) \Omega^{-1} \tilde{\Phi}^T(k)\}\\ 
  &= tr\{(\tilde{\Phi}(k+1)-\tilde{\Phi}(k)) \Omega^{-1} (\tilde{\Phi}(k+1)+\tilde{\Phi}(k))^T\}\\
  &=tr\{(-\bar{e}_{si}(k) \bar{\zeta}^T(k) \hspace{0.5 mm} \Omega -\sum\limits_{j=1}^{s} \bar{e}_{si}(k_j) \bar{\zeta}^T(k_j) \hspace{0.5 mm} \Omega)\Omega^{-1}\\
  &\hspace{4.5 mm}(-\bar{e}_{si}(k) \bar{\zeta}^T(k) \hspace{0.5 mm} \Omega -\sum\limits_{j=1}^{s} \bar{e}_{si}(k_j) \bar{\zeta}^T(k_j) \hspace{0.5 mm} \Omega + 2\tilde{\Phi}(k))^T\}\\
  &=tr\{(-\tilde{\Phi}(k) H_2+H_3) \hspace{0.5 mm} (-\tilde{\Phi}(k) H_2\Omega+H_3\Omega+ 2\tilde{\Phi}(k))^T\},\\
  \end{aligned}
\end{equation}
where $H_3=\bar{\omega}(k)\bar{\zeta}^T(k)+\sum\limits_{j=1}^{s} \bar{\omega}(k_j) \bar{\zeta}^T(k_j)$, and $\|H_3\| \leq \bar{\varepsilon}_n$. Here, one can obtain $\bar{\varepsilon}_n$ using the bound of $\omega$. 
Now, one has
\begin{equation}
  \begin{aligned}
    \label{Lyapunov Candidate Derivative 2}
  &V(k+1)-V(k)\\ 
  &=tr\{\tilde{\Phi}(k) P_1 \tilde{\Phi}^T(k)+\tilde{\Phi}(k) P_2+P_3 \tilde{\Phi}^T(k)+P_4\},\\
  \end{aligned}
\end{equation}
where $P_1=H_2 \Omega^T H^T_2-2H_2$, $P_2=-H_2\Omega^TH^T_3$, $P_3=-H_3 \Omega^T H^T_2+2H_3$, and $P_4=H_3\Omega^TH^T_3$.
Therefore, one can show
\begin{equation}
  \begin{aligned}
    \label{Lyapunov Candidate Derivative 3}
  &V(k+1)-V(k) \leq Q_1 \|\tilde{\Phi}(k)\|^2 + Q_2 \|\tilde{\Phi}(k)\| + Q_3,\\
  \end{aligned}
\end{equation}
where $Q_1=r \lambda^2_{max}(H_2)-2\lambda_{min}(H_2)$, $Q_2=r\lambda_{min}(H_2)\bar{\varepsilon}_n$ $+r\lambda_{min}(H_2)\bar{\varepsilon}_n+2\bar{\varepsilon}_n$, and $Q_3=r\bar{\varepsilon}^2_n$. Now, using \eqref{Learning rate}, it is clear that $Q_1 < 0$; therefore, using \eqref{Lyapunov Candidate}, one has the following inequality when $\omega(k)=0$:
\begin{equation}
    \label{Lyapunov Candidate Derivative 4}
  V(k+1)-V(k) \leq Q_1 \|\tilde{\Phi}(k)\|^2\leq r \hspace{0.5 mm} Q_1 V(k)< 0.
\end{equation}
where it shows that $\tilde{\Phi}$ converges to zero; thus, $e_{si}$ converges to zero according to \eqref{System Identification Error}. This completes the proof of the first part.

\hspace{-3.55 mm}Now, for $\omega(k) \neq 0$, since $\|\tilde{\Phi}(k)\| \geq 0$, $Q_1 < 0$, $Q_2 \geq 0$, and $Q_3 \geq 0$, the only valid non-negative root of \eqref{Lyapunov Candidate Derivative 3} is
\begin{equation}
  \begin{aligned}
    \label{parameter bound}
  &\tilde{\Phi}_b=\frac{-Q_2-\sqrt{Q^2_2-4 Q_1 Q_3}}{2Q_1}.\\
  \end{aligned}
\end{equation}
Thus, when $\|\tilde{\Phi}(k)\| > \tilde{\Phi}_b$, one has
\begin{equation}
  \begin{aligned}
    \label{Lyapunov Candidate Derivative 5}
  &V(k+1)-V(k) < 0,  
  \end{aligned}
\end{equation}
which makes $\tilde{\Phi}(k)$ to enter and stay in the compact set $S_{\tilde{\Phi}}=\{\tilde{\Phi}: \|\tilde{\Phi}\| \leq \tilde{\Phi}_b\}$; therefore, one can conclude that $e_{si}$ converges to a small region around zero using \eqref{System Identification Error}. This completes the proof of the second part.
\end{proof}

\begin{figure}[!h]
     \centering
     \includegraphics[width=1\linewidth]{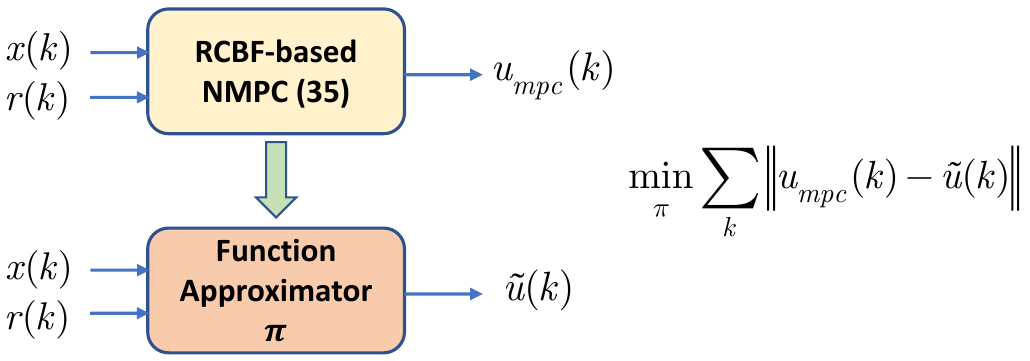}
     \caption{Schematic of NMPC policy function approximation. %A function approximator is used to approximate the NMPC policy \eqref{NMPC} that maps from state $x(k)$, output $y(k)$, and reference $r(k)$ to the control $u(k)$ ($\tilde{u}(k)$).}
     }
     \label{fig:policyLearning}
\end{figure}

\begin{remark} [System Identification]
\label{System Identification}
Theorem 1 presents the learning law of the proposed STF-based concurrent learning. We investigate two cases: i) the real system without the external disturbance and the simplification error, and ii) the real system with the bounded external disturbance and the bounded simplification error. For the first case, i.e., $\omega(k) = 0$, we prove that $e_{si}$ converges to zero, and the real system is identified perfectly. For the second case with $\omega(k) \neq 0$, although it is known that $e_{si}$ does not converge to zero, we prove that it is UUB, i.e., $e_{si}$ converges to a small region around zero \eqref{parameter bound}.
\end{remark}

%\begin{remark} [Rich Data]
%\label{Rich Data}
%The singular value maximizing approach \cite{chowdhary2011singular} is used for recording rich data in the history stack \eqref{Recorded Data}. When Condition \ref{Rank Condition} is met, the history stack \eqref{Recorded Data} is adequate to attain the outcomes outlined in Theorem \ref{Discrete-Time Concurrent Learning}. Nevertheless, substituting new rich data instead of old data within the historical stack can enhance the learning performance if the new data increases $\lambda_{min}(H_2)$ and/or decreases $\lambda_{max}(H_2)$, leading to a reduced convergence time.
%\end{remark}
In the next subsection, we use the developed STF-based concurrent learning once again to learn the control policy.

\subsection{Safe Data-Driven Predictive Control}
Using the trained STF model that describes the input–output relationship, Appendix A presents the transformation of the STF model to a state-space model, i.e., the identified model (5), by using the past inputs and outputs as the states of the model so that one can implement the NMPC (6) and the CBF (9). However, the NMPC is computationally expensive; therefore, we develop a safe data-driven predictive control using a STF-based policy approximator. As shown in Fig.~\ref{fig:policyLearning}, we utilize the STF-based concurrent learning to approximate the NMPC policy \eqref{NMPC}, which can be viewed as a mapping from the state $x(k)$ and the reference $r(k:k+N)$ to the optimal control $u_{mpc}(k)$. Using the trained policy approximator, we only have simple algebraic computations; thus, the online control computation is greatly reduced compared to the original optimization problem \eqref{NMPC}. Note that the STF-based state-space model is essentially an input-output model with the state being past inputs and outputs that are available, without requiring measured ``system states'' in a typical state-space model.

 Due to the model uncertainties, the safety constraints may not be satisfied for the real system. Therefore, we develop the RCBF-based NMPC so that the approximated control policy $\tilde{u}(k)$ guarantees the system safety against $e_{c}$, $e_{s}$, and $w$. Specifically, according to Remark \ref{System Identification}, one knows that the drift dynamics errors \eqref{identification error} are bounded, i.e., $\|e_{s}(k)\| \leq \varepsilon_s$ and $\|e_{c}(k)\| \leq \varepsilon_c$, where $\varepsilon_s$ and $\varepsilon_c$ are obtained using Theorem \ref{Discrete-Time Concurrent Learning} and the necessary derivations. Now, the subsequent theorem introduces a robust constraint satisfaction using the identified model \eqref{nominal model} to ensure the safety of the nonlinear system \eqref{disturbed system}.

\begin{theorem} [Robust Constraint Satisfaction]
\label{Robust Control Barrier Function}
Consider the nonlinear system \eqref{disturbed system}, the identified model \eqref{nominal model}, and the set \eqref{Safe set}. Defining the RCBF $h_r(\hat{x}) = h(\hat{x}) +\eta (\varepsilon_w + \varepsilon_s + \varepsilon_c)$, the robust safety constraint
\begin{equation}
  \begin{aligned}
    \label{RCBF Constraint}
  &h(\hat{x}(k+1)) \leq (1-\gamma)h(\hat{x}(k)) -\gamma \eta (\varepsilon_w + \varepsilon_s + \varepsilon_c)  
  \end{aligned}
\end{equation}
guarantees the forward invariance of the set $S$ against the model uncertainties $w$, $e_{s}$, and $e_{c}$.
\end{theorem}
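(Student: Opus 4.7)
The plan is to exploit the Lipschitz continuity of $h$ (Assumption 1.3) to transfer the barrier certificate from the nominal prediction $\hat{x}(k+1)$ onto the real state $x(k+1)$, and then to recognize that the tightened function $h_r$ satisfies the standard discrete-time CBF contraction \eqref{CBF condition2} along the identified model. Forward invariance of the tightened sublevel set $\{\hat{x}:h_r(\hat{x})\leq 0\}$ will then follow from the CBF machinery already introduced, and the Lipschitz bound will in turn convert this into $x(k)\in S$ for all $k$.

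I would begin with the one-step discrepancy derived from \eqref{disturbed system}: writing the nominal prediction as $\hat{x}(k+1)=\hat{f}(\hat{x}(k),\tilde{u}(k))$ gives $x(k+1)-\hat{x}(k+1) = -e_c(k)-e_s(k)+w(k)$, so the triangle inequality together with the bounds $\|e_s(k)\|\leq\varepsilon_s$, $\|e_c(k)\|\leq\varepsilon_c$ (inherited from Theorem 1 and Remark 1 as noted in the paragraph preceding the theorem) and $\|w(k)\|\leq\varepsilon_w$ from Assumption 1 yields $\|x(k+1)-\hat{x}(k+1)\|\leq \varepsilon_w+\varepsilon_s+\varepsilon_c$. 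Applying \eqref{DT Lipschitz constant} produces $h(x(k+1))\leq h(\hat{x}(k+1))+\eta(\varepsilon_w+\varepsilon_s+\varepsilon_c)$. Substituting \eqref{RCBF Constraint} for $h(\hat{x}(k+1))$, the $-\gamma\eta(\varepsilon_w+\varepsilon_s+\varepsilon_c)$ term from the constraint combines with the freshly introduced $+\eta(\varepsilon_w+\varepsilon_s+\varepsilon_c)$ into a single factor $(1-\gamma)\eta(\varepsilon_w+\varepsilon_s+\varepsilon_c)$, and regrouping gives
\begin{equation*}
h(x(k+1)) \leq (1-\gamma)\bigl[h(\hat{x}(k)) + \eta(\varepsilon_w+\varepsilon_s+\varepsilon_c)\bigr] = (1-\gamma)\, h_r(\hat{x}(k)).
\end{equation*}
The identical manipulation applied directly to \eqref{RCBF Constraint} recasts it as $h_r(\hat{x}(k+1))\leq (1-\gamma)\, h_r(\hat{x}(k))$, i.e., condition \eqref{CBF condition2} with $h_r$ in place of $h$ for the identified model.

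A one-line induction then establishes forward invariance of $\{\hat{x}:h_r(\hat{x})\leq 0\}$ along the nominal trajectory. Finally, $h_r(\hat{x})\leq 0$ is equivalent to $h(\hat{x})\leq -\eta(\varepsilon_w+\varepsilon_s+\varepsilon_c)$, and combining this margin with the one-step Lipschitz bound gives $h(x(k))\leq 0$ for all $k\geq 0$, i.e., $x(k)\in S$. The main obstacle I anticipate is the nominal-versus-real bookkeeping: the single-copy bound $\|x(k+1)-\hat{x}(k+1)\|\leq \varepsilon_w+\varepsilon_s+\varepsilon_c$ is really a one-step claim and implicitly assumes the receding-horizon re-initialization $\hat{x}(k)\leftarrow x(k)$ at each update; an open-loop nominal propagation would allow the discrepancy to accumulate and break the argument. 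I would need to make this convention explicit, and also flesh out the ``necessary derivations'' by which the output-error UUB bound from Theorem 1 is converted into the state-space drift bounds $\varepsilon_s$ and $\varepsilon_c$ used above.
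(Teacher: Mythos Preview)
Your proposal is correct and follows essentially the same approach as the paper: both use the Lipschitz bound \eqref{DT Lipschitz constant} together with the one-step discrepancy from \eqref{disturbed system} to show $h(x_k)\leq h_r(\hat{x}_k)$, and both observe that \eqref{RCBF Constraint} is precisely the standard CBF decrease condition \eqref{CBF condition} applied to $h_r$ along the identified model, so that forward invariance of $\{h_r\leq 0\}$ pulls back to forward invariance of $S$. Your version is in fact more careful than the paper's in making the receding-horizon re-initialization $\hat{x}(k)\leftarrow x(k)$ explicit; the paper's proof tacitly assumes this when it bounds $\|x_k-\hat{x}_k\|$ by a single step of uncertainty.
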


\begin{proof}
For simplicity of notation, we represent $x(k)$ as $x_k$ in the subsequent equation. The mean value theorem along with the nonlinear system \eqref{disturbed system} and the identified model \eqref{nominal model} yields
\begin{equation}
  \begin{aligned}
    \label{Mean Value}
  &h(x_k) = h(\hat{x}_k + (x_k-\hat{x}_k)) \\
  & \hspace{8.5 mm}= h(\hat{x}_k) + (h(x_k)-h(\hat{x}_k))\\
  & \hspace{8.5 mm} \leq h(\hat{x}_k) + \left| h(x_k)-h(\hat{x}_k) \right|\\
  & \hspace{8.5 mm} \leq h(\hat{x}_k) + \eta \left\| x_k-\hat{x}_k \right\| \\
  & \hspace{8.5 mm} \leq h(\hat{x}_k) + \eta \left( \left\| w_{k-1} \right\| + \left\| {e_{fs}}_{k-1} \right\| + \left\| {e_{fc}}_{k-1} \right\| \right) \\
  & \hspace{8.5 mm} \leq h(\hat{x}_k) + \eta (\varepsilon_w + \varepsilon_s + \varepsilon_c)\\
  & \hspace{8.5 mm} = h_r(\hat{x}_k).
  \end{aligned}
\end{equation}
Thus, to guarantee $h(x_k) \leq 0$, one can ensure $h_r(\hat{x}_k) \leq 0$. Using \eqref{CBF condition}, the safety constraint on $h_r(\hat{x}_k)$ is expressed as:
\begin{equation}
\begin{aligned}
    \label{Robust CBF condition}
  &\triangle h_r(\hat{x}(k)) \leq -\gamma h_r(\hat{x}(k)).
  \end{aligned}
\end{equation}
where guarantees $h_r(\hat{x}(k)) \leq 0$.
Now, substituting \eqref{Mean Value} into \eqref{Robust CBF condition} yields
\begin{equation}
\begin{aligned}
    \label{Robust CBF condition2}
  &h(\hat{x}(k+1)) - h(\hat{x}(k)) \leq - \gamma \left( h(\hat{x}(k)) +\eta (\varepsilon_w + \varepsilon_s + \varepsilon_c) \right),
  \end{aligned}
\end{equation}
which is \eqref{RCBF Constraint}. This completes the proof.
\end{proof}

Now, considering $X=\{x \in \mathbb{R}^n: h(x) \leq 0\}$ as the state constraint, we modify the NMPC \eqref{NMPC} with the following RCBF-based NMPC design:
\begin{equation}
  \begin{aligned}
    \label{RCBF-based NMPC}
    &(\mathbf{x}^{*},\mathbf{u}^{*},\mathbf{\gamma}^{*}) = \underset{\mathbf{\hat{x}},\mathbf{u},\mathbf{\gamma}}{\arg\min} \hspace{1 mm} J_N(\mathbf{\hat{x}},\mathbf{u}) + \varphi(\gamma)\\
    &s.t.\quad \hat{x}(k+1) = \hat{f}(\hat{x}(k),u(k)),\quad \hat{y}(k) = g(\hat{x}(k)),\\
    & \hspace{8.5 mm} \hat{x}(0) = \hat{x}_0,\quad u(k) \in U,\\
    & \hspace{8.5 mm} h(\hat{x}(k+1)) \leq (1-\gamma)h(\hat{x}(k)) -\gamma \eta (\varepsilon_w + \varepsilon_s + \varepsilon_c), \\
    & \hspace{8.5 mm} 0 < \gamma \leq 1,
  \end{aligned}
\end{equation}
where $\varphi(\gamma) = P \gamma^2$ represents a regularization term applied to the optimization variable $\gamma$, and $P > 0$. It is worth noting that the feasibility of the optimization problem \eqref{RCBF-based NMPC} guarantees that the control input satisfies the true CBF constraint (9). The optimization problem \eqref{RCBF-based NMPC} is formulated with the identified model \eqref{nominal model}; therefore, $\epsilon_w$ and $\epsilon_s$ are added to the CBF constraint of the NMPC so that the RCBF guarantees the system safety under the model uncertainties. On the other hand, since we learn the RCBF-based NMPC policy as $\tilde{u}$, it may violate the safety of the real system. Thus, one needs to make more conservative CBF constraint for the NMPC by considering the control policy learning error $\epsilon_c$.

\begin{remark} [Feasibility]
\label{Robust Safety}
The feasibility of the NMPC with both distance constraint ($\gamma = 1$) and input constraint is a challenging problem. Moreover, changing the distance constraint to the CBF constraint makes the feasibility of the NMPC problem \eqref{RCBF-based NMPC} more challenging since it considers more conservative constraint to improve the system safety. In the optimization problem \eqref{RCBF-based NMPC}, if $\gamma$ becomes relatively small, the sublevel set of the RCBF $h$ will be smaller, and the system tends to be safer; however, the intersection between the reachable set and the sublevel set might be infeasible. When $\gamma$ becomes larger, the sublevel set will be increased in the state space, which makes the optimization problem more likely to be feasible; however, the RCBF constraint might not be active during the optimization. If $\gamma = 1$, the RCBF constraint describes a simple distance constraint which requires the NMPC with a long horizon to ensure the system safety for the real system. Consequently, we adjust the decay rate of the CBF, i.e., $\gamma$, from a fixed constant value to an optimization variable so that it improves the feasibility of the optimization problem under the CBF constraint. Assuming the optimization problem \eqref{RCBF-based NMPC} is feasible under a distance constraint, the regularization term $\varphi(\gamma)$ makes the optimization problem feasible under CBF constraint, which the considered value for $P$ controls the trade off between improving safety (making $\gamma$ close to $0$) and keeping feasibility (keeping $\gamma$ close to $1$). It is worth noting that opting for a relatively too small value of $P$ is not advisable, as it could lead to an excessive relaxation of the RCBF constraint and make the optimized value $\gamma$ closer to 1. Formal guarantee on the recursive feasibility of the RCBF-based NMPC problem \eqref{RCBF-based NMPC} and the stability of the closed-loop system \cite{rakovic2022homothetic} requires more analysis and will be addressed in our future work.
\end{remark}

\subsection{Online Adaptive Control Policy}
Although the approximated control policy $\tilde{u}(k)$ may already lead to a reasonable closed-loop performance for many cases, one can see a performance loss for a system due to two main reasons i) the selection of the hyperparameters governing the architecture of the STF-based concurrent learning, and ii) limited training data for some regions of the feasible state space. To minimize the performance loss caused by $e_{c}$, $e_{s}$, and $w$, it is desirable to correct $\tilde{u}(k)$ online. %We propose an online correction scheme, including an KKT adaptation and an ancillary feedback control, to handle this issue.

%The following corollary is used to propose the KKT adaptation and minimize the asymptotic performance loss caused by the control learning error $e_{c}$.

\begin{corollary}[Steady-State Optimization Problem \cite{krishnamoorthy2021adaptive}]
\label{lemm1}
Using the RCBF-based NMPC policy, a unique equilibrium pair $(x_e,u_e)$ minimizes the steady-state optimization problem
\begin{equation}
  \begin{aligned}
    \label{Steady-State NMPC}
    &(\mathbf{x}_{e},\mathbf{u}_{e},\mathbf{\gamma}_{e}) = \underset{\mathbf{\hat{x}},\mathbf{u},\mathbf{\gamma}}{\arg\min} \hspace{1 mm} l(\hat{x},u,\gamma)\\
    &s.t.\quad \hat{x} = \hat{f}(\hat{x},u),\quad \hat{y} = g(\hat{x})\\
    & \hspace{8.5 mm} u \in U,\\
    & \hspace{8.5 mm} h(\hat{x}) \leq (1-\gamma)h(\hat{x}) -\gamma \eta (\varepsilon_w + \varepsilon_s + \varepsilon_c),
  \end{aligned}
\end{equation}
where $l(\hat{x},u,\gamma) = \phi(\hat{x},u,\hat{y},r) + \psi(\hat{x},\hat{y},r) + \varphi(\gamma)$ represents the steady-state cost function. \hspace{42 mm} $\square$
\end{corollary}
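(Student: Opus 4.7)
The plan is to recognize that the corollary is essentially the static counterpart of the finite-horizon problem \eqref{RCBF-based NMPC}. First I would impose the equilibrium condition $\hat{x}(k+1)=\hat{x}(k)=\hat{x}$ on the dynamic constraint of \eqref{RCBF-based NMPC}, which collapses it to the fixed-point equation $\hat{x}=\hat{f}(\hat{x},u)$. Under this condition, the horizon sum of stage costs reduces (up to a constant scaling) to the single per-stage expression $l(\hat{x},u,\gamma)=\phi(\hat{x},u,\hat{y},r)+\psi(\hat{x},\hat{y},r)+\varphi(\gamma)$ evaluated along the time-invariant trajectory, which matches the cost in \eqref{Steady-State NMPC}. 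Similarly, the RCBF constraint imposed at every time step collapses into the single inequality written in \eqref{Steady-State NMPC}, since $\hat{x}(k+1)=\hat{x}(k)$. This establishes equivalence between the dynamic NMPC at equilibrium and the static problem.

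For existence, I would argue by coercivity: the cost $l$ is quadratic in $\hat{x}$, $u$, and $\gamma$ with positive-definite weighting matrices $Q,R,P,Q_N,P_N$ from \eqref{Cost format} and $P>0$ in $\varphi$, so its sublevel sets are bounded. Provided the feasible set is nonempty (guaranteed by the feasibility discussion in Remark 1), the Weierstrass theorem applied to a sufficiently large compact sublevel set yields at least one minimizer $(x_e,u_e,\gamma_e)$.

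The main obstacle will be \emph{uniqueness}, because the feasible set defined by the nonlinear equality $\hat{x}=\hat{f}(\hat{x},u)$, the output relation $\hat{y}=g(\hat{x})$, and the nonlinear RCBF inequality is generally not convex. To handle this, I would invoke the standard regularity assumption in \cite{krishnamoorthy2021adaptive} that $I-\partial\hat{f}/\partial\hat{x}$ is nonsingular at the equilibrium, so the implicit function theorem applied to $\hat{x}-\hat{f}(\hat{x},u)=0$ yields a smooth local parameterization $\hat{x}=\chi(u)$ of the steady-state manifold by $u$. Restricted to this manifold, the strictly convex cost $l$ becomes a strictly convex function of $u$ (composed with a smooth map and still positive-definite quadratic in its unconstrained directions), whose minimizer is therefore unique in the neighborhood of interest, and $x_e=\chi(u_e)$ inherits uniqueness. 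Uniqueness of $\gamma_e$ then follows from strict convexity of $\varphi(\gamma)=P\gamma^2$ together with the fact that, at the optimal $(x_e,u_e)$, the RCBF constraint is affine in $\gamma$, so the restricted optimization in $\gamma$ is a strictly convex one-dimensional problem with a unique minimizer. This local uniqueness is exactly what is needed to make the online KKT-based correction in the next subsection well-defined.
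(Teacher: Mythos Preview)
The paper does not prove this corollary at all: it is stated as a result imported from \cite{krishnamoorthy2021adaptive} and closed with a $\square$ immediately after the statement, with no argument given. So there is no paper proof to compare your proposal against; you are supplying an argument where the authors simply cite one.

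On the substance of your proposal: the reduction of \eqref{RCBF-based NMPC} to \eqref{Steady-State NMPC} under the equilibrium condition is correct, and the existence argument via coercivity of the positive-definite quadratic cost and Weierstrass is fine. The gap is in the uniqueness step. You claim that after parameterizing the steady-state manifold by $\hat{x}=\chi(u)$ via the implicit function theorem, the reduced cost $u\mapsto l(\chi(u),u,\gamma)$ is strictly convex because $l$ is a positive-definite quadratic. This inference is false in general: composing a strictly convex function with a smooth nonlinear map does not preserve convexity, let alone strict convexity, so the reduced problem can have multiple local minima. What the cited reference (and the economic-MPC literature it draws on) actually does is \emph{assume} that the steady-state problem has a unique minimizer satisfying second-order sufficient conditions, rather than derive uniqueness from the quadratic structure of $l$. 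Your final sentence already hints at the right fix: state local uniqueness at a KKT point satisfying LICQ and strict second-order conditions as a standing assumption, which is precisely what is needed for the sensitivity-based correction in Theorem~\ref{Online Adaptive Control Policy}, and drop the global strict-convexity claim.
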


\begin{condition}[Steady-State Approximated Optimal Control]
\label{cond1}
$(x^{\prime}_e,u^{\prime}_e)$ is the asymptotically stable equilibrium point of the nominal model \eqref{nominal model} under the approximated optimal control policy $\tilde{u}$.
\end{condition}

\begin{lemma}[Modified Steady-State Optimization Problem]
\label{Modified Steady-State Optimization Problem}
For the equilibrium point $(x_e,u_e)$, the steady-state optimization problem \eqref{Steady-State NMPC} is implicitly expressed in the following form:
\begin{equation}
  \begin{aligned}
    \label{Modified Steady-State NMPC}
    &\mathbf{u}_{e} = \underset{\mathbf{u}}{\arg\min} \hspace{1 mm} \tilde{l}(u)\\
    &s.t.\quad \tilde{h}(u)\leq 0,
  \end{aligned}
\end{equation}
where the constraints of the optimization problem \eqref{Steady-State NMPC} are collectively denoted as $\tilde{h}(u)$.
\end{lemma}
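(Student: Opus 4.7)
The plan is to eliminate the variables $\hat{x}$, $\hat{y}$, and $\gamma$ from the steady-state problem (36) so that the optimization is expressed purely in terms of $u$. First, I would invoke the implicit function theorem on the equilibrium equation $\hat{x} = \hat{f}(\hat{x},u)$: under the standard non-singularity of $I - \partial\hat{f}/\partial\hat{x}$ at the equilibrium point, which is consistent with the asymptotic stability postulated in Condition 2, the equation $\hat{x} - \hat{f}(\hat{x},u) = 0$ locally defines a smooth map $\hat{x} = \chi(u)$ in a neighborhood of $u_e$, and hence $\hat{y} = g(\chi(u))$.

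Next, I would reduce the CBF inequality at steady state. Setting $\hat{x}(k+1) = \hat{x}(k) = \hat{x}$, the constraint $h(\hat{x}) \leq (1-\gamma)h(\hat{x}) - \gamma\eta(\varepsilon_w + \varepsilon_s + \varepsilon_c)$ collapses to $\gamma\,[h(\hat{x}) + \eta(\varepsilon_w + \varepsilon_s + \varepsilon_c)] \leq 0$; since $\gamma > 0$, this is equivalent to $h_r(\chi(u)) \leq 0$, a constraint that no longer involves $\gamma$. Because $\gamma$ now appears only in the regularizer $\varphi(\gamma) = P\gamma^2$ of the objective, it can be minimized out analytically and independently of $u$, yielding some $\gamma_e \in (0,1]$. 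I would then define $\tilde{l}(u) := \phi(\chi(u),u,g(\chi(u)),r) + \psi(\chi(u),g(\chi(u)),r) + \varphi(\gamma_e)$ and stack the remaining inequality constraints, namely $h_r(\chi(u)) \leq 0$ together with the componentwise inequalities that encode $u \in U$, into a single vector-valued $\tilde{h}(u) \leq 0$. Since the elimination is bijective on the feasible set, the minimizer $u_e$ of (36) coincides with the minimizer of the reduced problem (37).

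The step I expect to be the main obstacle is the rigorous application of the implicit function theorem, since it requires $I - \partial_{\hat{x}}\hat{f}(x_e,u_e)$ to be non-singular at the equilibrium; this regularity is not explicitly postulated in the lemma and must be inherited from the asymptotic stability of the nominal model under $\tilde{u}$ in Condition 2, as an eigenvalue at unity would contradict local contraction. A secondary subtlety is that the feasibility region $0 < \gamma \leq 1$ is open on the left, but because $\gamma$ is completely decoupled from the $u$-constraint after the collapse above, the chosen value $\gamma_e$ affects only a constant offset of $\tilde{l}$ and is immaterial to the identification of $u_e$.
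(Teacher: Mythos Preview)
Your argument is correct and rests on the same core device as the paper: eliminate $\hat{x}$ from the steady-state problem via the fixed-point relation $\hat{x}=\hat f(\hat{x},u)$, which is well-posed precisely when $I-\partial_{\hat{x}}\hat f$ is non-singular. The paper, however, does not invoke the implicit function theorem abstractly. It linearizes the nominal model and cost around the trajectory $(\hat{x}(k),\tilde u(k))$ generated by the approximated policy, writes the first-order variations $\delta\hat{x}(k+1)=\hat f_{\hat{x}}\,\delta\hat{x}(k)+\hat f_{u}\,\delta u(k)$ and $\delta l=l_{\hat{x}}\,\delta\hat{x}+l_u\,\delta u$, imposes the steady-state condition $\delta\hat{x}(k+1)=\delta\hat{x}(k)$, and solves to obtain the explicit reduced gradient $\delta\tilde l(u)=l_{\hat{x}}(I-\hat f_{\hat{x}})^{-1}\hat f_u+l_u$ (and likewise $\delta\tilde h$). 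This constructive sensitivity calculation is not merely a stylistic choice: those very expressions $\delta\tilde l$, $\delta^2\tilde l$, $\tilde h_a$, and $\delta\tilde h_a$ are the ingredients of the KKT adaptation law in Theorem~3, so the paper's proof doubles as the derivation of the quantities used downstream. Your route establishes the lemma cleanly but would still owe the reader that gradient formula before Theorem~3 can be stated. Conversely, your treatment of $\gamma$ is more careful than the paper's: you show explicitly that at steady state the RCBF inequality collapses to a $\gamma$-free constraint $h_r(\chi(u))\le 0$, so $\gamma$ decouples and affects only an additive constant in $\tilde l$; the paper's proof passes over this point.
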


\begin{proof}
Let $\hat{x}(k)$ and $\tilde{u}(k)$ denote the nominal trajectory using the function-approxiamted control policy. Thus, the nominal trajectory for the NMPC policy \eqref{RCBF-based NMPC} is expressed as
\begin{equation}
  \begin{aligned}
    \label{Difference}
    & \hat{x}_{mpc}(k)=\hat{x}(k)+\delta \hat{x}(k),\\
    & u_{mpc}(k)=\tilde{u}(k)+\delta u(k),
  \end{aligned}
\end{equation}
where $\delta u(k)$ represents the difference between the NMPC and approximated control policy at each time step $k$, and $\delta \hat{x}(k)$ denotes the resulting change on the system states. Using the nominal model \eqref{nominal model}, one has
\begin{equation}
  \begin{aligned}
    \label{Variation}
    & \delta \hat{x}(k+1)= \hat{f}_{\hat{x}} (k) \delta \hat{x}(k) + \hat{f}_{u} (k) \delta u(k),\\
    & \delta l(k)= l_{\hat{x}}(k) \delta \hat{x}(k) + l_{u} (k) \delta u(k),
  \end{aligned}
\end{equation}
where $\hat{f}_{\hat{x}}(k)$, $\hat{f}_{u}(k)$, $l_{\hat{x}}(k)$, and $l_{u}(k)$ are partial derivatives with respect to $\hat{x}$ and $u$, respectively, and  obtained using $\hat{x}(k)$ and $\tilde{u}(k)$.

\hspace{-5 mm} Now, it is clear that $\delta \hat{x}(k+1)=\delta \hat{x}(k)$ for the steady-state condition; therefore, one has
\begin{equation}
  \begin{aligned}
    \label{eq13}
    \delta \hat{x}(k)= (I_n-\hat{f}_{\hat{x}}(k))^{-1} \hat{f}_{u}(k) \hspace{0.75 mm} \delta u(k),
  \end{aligned}
\end{equation}
and
\begin{equation}
  \begin{aligned}
    \label{eq14}
    \delta l(k)= (l_{\hat{x}}(k) (I_n-\hat{f}_{\hat{x}}(k))^{-1} \hat{f}_{u}(k) + l_{u} (k)) \hspace{0.75 mm} \delta u(k).
  \end{aligned}
\end{equation}
Using \eqref{eq14}, one has $\delta \tilde{l}(u) = (l_{\hat{x}} (I_n-\hat{f}_{\hat{x}})^{-1} \hat{f}_{u} + l_{u})$ and can obtain $\tilde{l}(u)$. Similarly, $\tilde{h}(u)$ is obtained using the same process. This completes the proof.
\end{proof}

Now, consider $\tilde{x}=\hat{x}-x$ and define an auxiliary variable $s$ as
\begin{equation}
  \begin{aligned}
    \label{s}
    s(k)=\Gamma \tilde{x}(k),
  \end{aligned}
\end{equation}
where $\Gamma \in \mathbb{R}^{m \times n}$ is designed such that $\Gamma \hat{f}_u \in  \mathbb{R}^{m \times m}$ is a diagonal matrix. The following theorem presents the proposed online policy correction scheme, including an KKT adaptation and an ancillary feedback control, to minimize the performance loss caused by the model uncertainties.

\begin{theorem}[Online Control Policy Correction]
\label{Online Adaptive Control Policy}
Considering the real system \eqref{disturbed system}, the nominal model \eqref{nominal model}, Corollary \ref{Modified Steady-State Optimization Problem}, Lemma \ref{Modified Steady-State Optimization Problem}, and the auxiliary variable $s$ \eqref{s}, the online adaptation policy
\begin{equation}
  \begin{aligned}
    \label{Online Adaptation}
    & u(k)=\tilde{u}(k)+\delta u(k)+K(\hat{x}(k),x(k)),\\
    & \delta u(k) \approx \delta u(k-1)-K_0 \begin{bmatrix} \tilde{h}_a(u(k-1))\\ \mathcal{A}^T \delta \tilde{l}(u(k-1)) \end{bmatrix},\\
    & K_0=\begin{bmatrix} \delta \tilde{h}_a(u(k-1))\\ \mathcal{A}^T \delta^2 \tilde{l}(u(k-1)) \end{bmatrix}^{-1},\\
    & K(\hat{x}(k),x(k))=\frac {1}{\Gamma \hat{f}_{u}(\hat{x}(k),\tilde{u}_{mpc}(k))} (\Upsilon s(k)+\Gamma \varepsilon_w+\Gamma \varepsilon_s),
  \end{aligned}
\end{equation}
with $\mathcal{A}^T \delta \tilde{h}_a(u)^T=0$, design matrix $\Upsilon \in \mathbb{R}^{m \times m}$ $(0 \leq \Upsilon_{ii}<1, \hspace{1 mm} i=1,...,m)$, and $\tilde{u}_{mpc}(k) = \tilde{u}(k)+\delta u(k)$, minimizes the performance loss caused by the model uncertainties.
\end{theorem}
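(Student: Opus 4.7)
The proof naturally splits along the decomposition $u = \tilde u + \delta u + K$: the KKT-based increment $\delta u$ must close the control-learning gap (i.e., drive $\tilde u+\delta u$ toward the true NMPC optimum $u_{mpc}$, absorbing $e_c$), while the ancillary feedback $K$ must drive the tracking error $\tilde x = \hat x - x$ into a small residual set (absorbing $e_s$ and $w$). Once both mechanisms are established, the closed-loop cost differs from the nominal NMPC cost only by a term governed by $\varepsilon_w$, $\varepsilon_s$ and the Newton residual, and this is precisely the sense in which the performance loss is minimized.

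For the KKT part, first use Lemma 2 to reduce the steady-state NMPC to $\min_u \tilde l(u)$ subject to $\tilde h_a(u)\le 0$, where $\tilde h_a$ collects the (anticipated) active constraints. The first-order optimality conditions are $\nabla\tilde l(u)^T + \nabla\tilde h_a(u)^T\lambda = 0$ together with $\tilde h_a(u)=0$. Choosing $\mathcal A$ of full column rank with $\mathcal A^T\,\delta\tilde h_a(u)^T = 0$ spans the null space of the active Jacobian and eliminates $\lambda$, so KKT collapses to the square nonlinear system
\begin{equation*}
F(u) := \begin{bmatrix} \tilde h_a(u) \\ \mathcal A^T\,\delta\tilde l(u) \end{bmatrix} = 0.
\end{equation*}
A single Newton step at $u(k-1)$ gives $\delta u(k) \approx \delta u(k-1) - [\nabla F(u(k-1))]^{-1} F(u(k-1))$, which matches the stated update with $K_0 = [\nabla F(u(k-1))]^{-1}$. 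Provided $\nabla F$ is non-singular near the optimum (linear independence constraint qualification plus a second-order sufficient condition), standard Newton-contraction arguments yield $\tilde u(k)+\delta u(k)\to u_{mpc}$, which asymptotically nulls the $e_c$-induced drift.

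For the feedback part, combine \eqref{disturbed system} with $\hat x(k+1)=\hat f(\hat x(k),\tilde u_{mpc}(k))$ and apply a first-order Taylor expansion of $\hat f$ in its control argument about $\tilde u_{mpc}(k)$ to obtain
\begin{equation*}
\tilde x(k+1) \approx -\,\hat f_u(\hat x(k),\tilde u_{mpc}(k))\,K(\hat x(k),x(k)) + e_s(k) - w(k).
\end{equation*}
Premultiplying by $\Gamma$, invoking the design choice that $\Gamma\hat f_u$ is diagonal and invertible, and substituting the stated $K$ yields
\begin{equation*}
s(k+1) \approx -\,\Upsilon\, s(k) + \Gamma\bigl(e_s(k)-\varepsilon_s\bigr) - \Gamma\bigl(w(k)+\varepsilon_w\bigr),
\end{equation*}
where both bracketed terms are bounded by Assumption \ref{Bounded Terms} and Theorem \ref{Discrete-Time Concurrent Learning}. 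Because every diagonal entry of $\Upsilon$ lies in $[0,1)$, the map $s\mapsto -\Upsilon s$ is a contraction, and the discrete-time Lyapunov function $V_s(k)=s(k)^T s(k)$ decreases outside a ball whose radius scales with $\varepsilon_s+\varepsilon_w$. Hence $s$, and therefore $\tilde x$, is UUB, so the real state follows the nominal trajectory up to a bounded tube.

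The main obstacle is making the two ``$\approx$'' signs honest. For the Newton iteration, one has to argue that a single step per time index is enough because the KKT optimum itself drifts only slowly (the reference and the identified $\hat f$ change incrementally), so the Newton residual stays inside the local quadratic-convergence basin; this is where the small-$\delta u$ linearization supplied by Lemma \ref{Modified Steady-State Optimization Problem} is reused. For the feedback, the higher-order Taylor remainder $O(\|K\|^2)$ in $\hat f$ must be shown, under smoothness of $\hat f$ and a small-gain choice of $\Upsilon$, to be absorbable into the bounded perturbation driving the $s$-dynamics without disturbing the contraction. With these two technical points in hand, combining the KKT convergence with the tracking-tube result bounds the deviation of $J_N$ from its nominal value by a function of $\varepsilon_c,\varepsilon_s,\varepsilon_w$ that vanishes with the uncertainties, establishing the minimization claim.
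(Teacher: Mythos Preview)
Your proposal follows essentially the same two-part structure as the paper's own proof: the KKT reformulation $F(u)=[\tilde h_a(u);\,\mathcal A^T\delta\tilde l(u)]=0$ and its Newton step handle $e_c$, and the Lyapunov function $V=s^Ts$ together with the linearized error dynamics $\tilde x(k+1)=-\hat f_u K+e_s-w$ handles $e_s$ and $w$. The one place you stop short is the feedback conclusion: the paper does not settle for UUB but uses the deliberate bias terms $\Gamma\varepsilon_w+\Gamma\varepsilon_s$ in $K$ to dominate the actual perturbations, obtaining $s(k+1)\le -\Upsilon s(k)$ and hence the strict decrease $V(k+1)-V(k)\le -\lambda_{\min}(I_m-\Upsilon^T\Upsilon)\,V(k)<0$, i.e., $s\to 0$ rather than merely entering a residual tube.
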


\begin{proof}
The first part of the proof focuses on minimizing the performance loss for the nominal model \eqref{nominal model} caused by the control learning error. Considering Corollary \ref{Modified Steady-State Optimization Problem} and Lemma \ref{Modified Steady-State Optimization Problem}, one can conclude that $(x_e,u_e)$ represents the asymptotically stable equilibrium point of the nominal model under the RCBF-based NMPC policy. Hence, $u_{mpc}$ satisfies the KKT conditions of \eqref{Modified Steady-State NMPC}, which are expressed as
\begin{equation}
  \begin{aligned}
    \label{KKT}
    & \delta \tilde{l}(u) + \delta \tilde{h}_a(u)^T \lambda=0,\\
    & \lambda^T \tilde{h}(u)=0,
  \end{aligned}
\end{equation}
where $\tilde{h}_a(u)$ denotes the active constraints, and $\lambda\geq 0$ is the Lagrange multiplier. Now, one can rewrite \eqref{KKT} as
\begin{equation}
  \begin{aligned}
    \label{KKT 2}
    & \mathcal{A}^T \delta \tilde{l}(u)=0,\\
    & \tilde{h}_a(u)=0,
  \end{aligned}
\end{equation}
where $\mathcal{A}$ lies in the null space of the active constraint variation, i.e., $\mathcal{A}^T \delta \tilde{h}_a(u)^T=0$.

\hspace{-5 mm} Due to the control learning error $e_{c}$, one may have $(x^{\prime}_e,u^{\prime}_e) \neq (x_e,u_e)$; therefore, the goal is that the approximated control policy $\tilde{u}(k)$ is adapted online such that it guarantees $(x^{\prime}_e,u^{\prime}_e)=(x_e,u_e)$. Using Lemma \ref{lemm1}, it is clear that $(x^{\prime}_e,u^{\prime}_e)$ does not satisfy the KKT conditions \eqref{KKT 2} if $(x^{\prime}_e,u^{\prime}_e) \neq (x_e,u_e)$. Thus, the deviation from the KKT condition \eqref{KKT 2} signifies an asymptotic performance loss because of the RCBF-based NMPC policy approximation. Consequently, $\delta u$ minimizes the asymptotic performance loss due to $e_{c}$, and the gain $K_0$ is selected such that it fine-tunes asymptotic performance while minimizing its impact on the system dynamics. Therefore, the first part of the proof is completed.

\hspace{-5 mm} Now, we have a reasonable performance for the nominal model under the approximated control policy $\tilde{u}(k)$; however, the performance loss still exists for the real system \eqref{disturbed system} due to the unknown disturbance and the system identification error. Considering the nominal model \eqref{nominal model} under $\tilde{u}_{mpc}$, one has
\begin{equation}
  \begin{aligned}
    \label{x tilde}
    &\tilde{x}(k+1)=\hat{x}(k+1)-x(k+1)\\
    &=\hat{f}(\hat{x}(k),\tilde{u}_{mpc}(k))-f(x(k),u(k))-w(k)\\
    &=\hat{f}(\hat{x}(k),\tilde{u}_{mpc}(k))-\hat{f}(\hat{x}(k),u(k))+e_{s}(k)-w(k)\\
    &= \hat{f}_{u}(\hat{x}(k),\tilde{u}_{mpc}(k)) \hspace{0.7 mm} (\tilde{u}_{mpc}(k)-u(k)) + e_{s}(k) - w(k)\\
    &=- \hat{f}_{u}(\hat{x}(k),\tilde{u}_{mpc}(k)) \hspace{0.7 mm} K(\hat{x}(k),x(k)) + e_{s}(k) - w(k).
  \end{aligned}
\end{equation}
Now, the Lyapunov function candidate is considered as
\begin{equation}
  \begin{aligned}
    \label{Lyapunov}
    V(k)=s(k)^T s(k),
  \end{aligned}
\end{equation}
where $V(k)$ is a positive definite function, and one has
\begin{equation}
  \begin{aligned}
    \label{Lyapunov derivative}
    V(k+1)-V(k)=s(k+1)^T s(k+1)-s(k)^T s(k).
  \end{aligned}
\end{equation}
Using \eqref{s}-\eqref{x tilde}, one can derive
\begin{equation}
  \begin{aligned}
    \label{s 2}
    &s(k+1)=\Gamma \tilde{x}(k+1)\\
    &=-\Upsilon s(k)-\Gamma \varepsilon_w-\Gamma \varepsilon_s+\Gamma e_f(k) -\Gamma w(k)\\
    &\leq -\Upsilon s(k)-\Gamma \varepsilon_w-\Gamma \varepsilon_s+\Gamma \|e_f(k)\| +\Gamma \|w(k)\|\\
    &\leq -\Upsilon s(k).
  \end{aligned}
\end{equation}
Using \eqref{Lyapunov derivative} and \eqref{s 2}, considering $P=I_m - \Upsilon^T \Upsilon$, one has
\begin{equation}
  \begin{aligned}
    \label{Lyapunov derivative 2}
    &V(k+1)-V(k) \leq -s(k)^T P s(k)\\
    &\hspace{25.4 mm} \leq -\lambda_{min}(P)V(k)\\
    &\hspace{25.4 mm} <0.
  \end{aligned}
\end{equation}
Consequently, one can conclude that $\delta u(k)$ minimizes the difference between $u_{mpc}$ and $\tilde{u}$, and $K(\hat{x},x)$ keeps the actual state $x$ of the real system around the nominal trajectory $\hat{x}$ under $\tilde{u}_{mpc}$; therefore, $K(\hat{x},x)$ minimizes the performance loss due to the system identification error $e_{s}$ and the the unknown disturbance $w$. This completes the proof.
\end{proof}

\begin{remark} [KKT Condition]
\label{KKT Condition}
To develop the proposed adaptation law \eqref{Online Adaptation}, Condition 2 must be satisfied for the nominal model, which means that it converges to an equilibrium point $(x^{\prime}_e,u^{\prime}_e)$ under the approximated control policy $\tilde{u}(k)$, but it may not be the desired equilibrium point $(x_e,u_e)$ from the RCBF-based NMPC policy $u_{mpc}(k)$ because of the policy approximation error. When we train the STF-based policy approximator, one wants to accomplish the equilibrium solution $(x_e,u_e)$, where the KKT conditions are satisfied. Consequently, we adapt $\tilde{u}(k)$ online to ensure that $(x^{\prime}_e,u^{\prime}_e)$ holds the KKT conditions and make $(x^{\prime}_e,u^{\prime}_e) = (x_e,u_e)$.
\end{remark}

\begin{remark} [Offline Probabilistic Verification]
\label{Offline Probabilistic Verification}
To guarantee that Condition \ref{cond1} is satisfied, using the offline safe data-driven predictive control, the nominal model is simulated for $N_v$ randomly selected initial states in the training data range as
\begin{equation*}
  \begin{aligned}
    \label{eq20}
    N_v \geq \frac{\log \frac {1}{\kappa }}{\log \frac {1}{1-\epsilon}},
  \end{aligned}
\end{equation*}
where $\epsilon \in (0,1)$ and $\kappa  \in (0,1)$ denote the accuracy and confidence of the offline probabilistic verification. If all $N_v$ closed-loop trajectories are stable ($E=0$), one can conclude that the nominal model under the approximated control policy converges to $(x^{\prime}_e,u^{\prime}_e)$ for all initial states with the probability \cite{krishnamoorthy2021adaptive}
\begin{equation*}
  \begin{aligned}
    \label{eq21}
    Pr\{Pr\{E=0\} \geq 1-\epsilon\} \geq 1-\kappa.
  \end{aligned}
\end{equation*}
However, if any closed-loop trajectory from $N_v$ samples is unstable, the control learning procedure must be repeated.
\end{remark}

\begin{remark} [Computational Cost]
\label{Computational Cost}
The developed safe data-driven predictive control is significantly more computationally efficient compared to the RCBF-based NMPC \eqref{RCBF-based NMPC} since the proposed control framework approximates the NMPC policy while keeping the real system safe; however, it does not need to solve an optimization problem at each time step $k$ (only algebraic computations are needed). In comparison with our previous work \cite{vahidi2022data}, the RCBF is extended to guarantee the system safety against not only the system identification error and the external disturbance but also the control learning error. Therefore, we have removed the QP safety filter from the algorithm. Moreover, the proposed online adaptive control policy enables us to minimize the performance loss; therefore, we do not need to consider a switching criteria for returning the NMPC. These two contributions effectively reduce the computational cost for the proposed algorithm.
\end{remark}

\begin{remark} [Comparison]
\label{Comparison}
In comparison with \cite{chen2020online}, the regressor vector $\zeta({U}_{stf},\alpha)$ has to be persistently exciting, imposing  conditions on past, current, and future regressor vectors that is difficult or impossible to verify online. Instead, Condition 1 only deals with a subset of past data, which makes it easy to monitor. Also, it is convenient to check whether replacing new data will increase $\lambda_{min}(H_2)$ and/or decrease $\lambda_{max}(H_2)$ to reduce the convergence time. Compared to \cite{chowdhary2010concurrent}, the STF-based concurrent learning does not require the measurement or estimation of the derivatives of the system states (1). On the other hand, compared to \cite{taylor2020adaptive,lopez2020robust,garg2021robust,black2021fixed,isaly2021adaptive}, the developed RCBF considers all types of model uncertainties, i.e., the external disturbance, the system identification error, and the control learning error to guarantees the system safety. Moreover, the RCBF and the NMPC are combined to use the advantage of each one, i.e. system safety and optimal control. Last but not least, we improve the performance of the trained controller using the proposed online adaptive control policy, including a KKT adaptation and a feedback control.
\end{remark}

\begin{remark} [Practical Challenges] %Safety and Performance
\label{Control Learning Error}
Conditions 1 and 2 are the necessary requirements for the proposed system identification and control policy learning approaches, respectively. Condition 1 represents PE input requirement for system identification as a rank condition on collected data matrix, which is satisfied by recording rich (informative) data. Before satisfying Condition 1, we use RLS to update the local model parameters. On the other hand, Condition 2 illustrates that the performance of the trained control policy is reasonable; thus, if Condition 2 is not satisfied, we must repeat the training procedure with more rich data. In practical control systems, one has no difficulty for satisfying Conditions 1 and 2 if sufficient data is collected to train a model for each part. However, we assume a bounded unknown disturbance in Assumption 1, which affects the bounds on the system identification error and the control learning error. If Assumption 1 is satisfied, Theorems 1, 2, and 3 are robust against the unknown disturbance. The control framework needs to be modified for unbounded unknown disturbances, which will be addressed in our future work. Moreover, for time-varying systems \cite{zhang2016switched}, the proposed framework can easily update the parameters of the trained STF model for system identification using informative real-time data. According to Theorem 1, it is convenient to check whether the new data is informative or does not add any information to the trained model. Using \eqref{Lyapunov Candidate Derivative 3} in Theorem 1, if replacing new data increases $\lambda_{min}(H_2)$ and/or decreases $\lambda_{max}(H_2)$, one should update the trained parameters online. However, for updating the trained STF control policy, one needs to consider an event-triggered NMPC such that the NMPC is returned if the control scheme has performance losses. We will consider this case in our future work.
%Using Theorem \ref{Online Adaptive Control Policy}, it is clear that the online adaptive control policy \eqref{Online Adaptation} keeps the actual state $x$ around the nominal state $\hat{x}$ under $\tilde{u}_{mpc}$. Thus, using the fourth line in \eqref{Mean Value}, one can make sure that the online adaptive control policy improves the safety for the real system. Moreover, since the performance loss caused by $e_{c}$, $e_{s}$, and $w$ is mitigated, the online adaptive control policy \eqref{Online Adaptation} has better performance than the approximated control policy $\tilde{u}(k)$ and the RCBF-based NMPC policy \eqref{RCBF-based NMPC} for the online part.
\end{remark}

\begin{figure}[!h]
     \centering
     \includegraphics[width=0.99\linewidth]{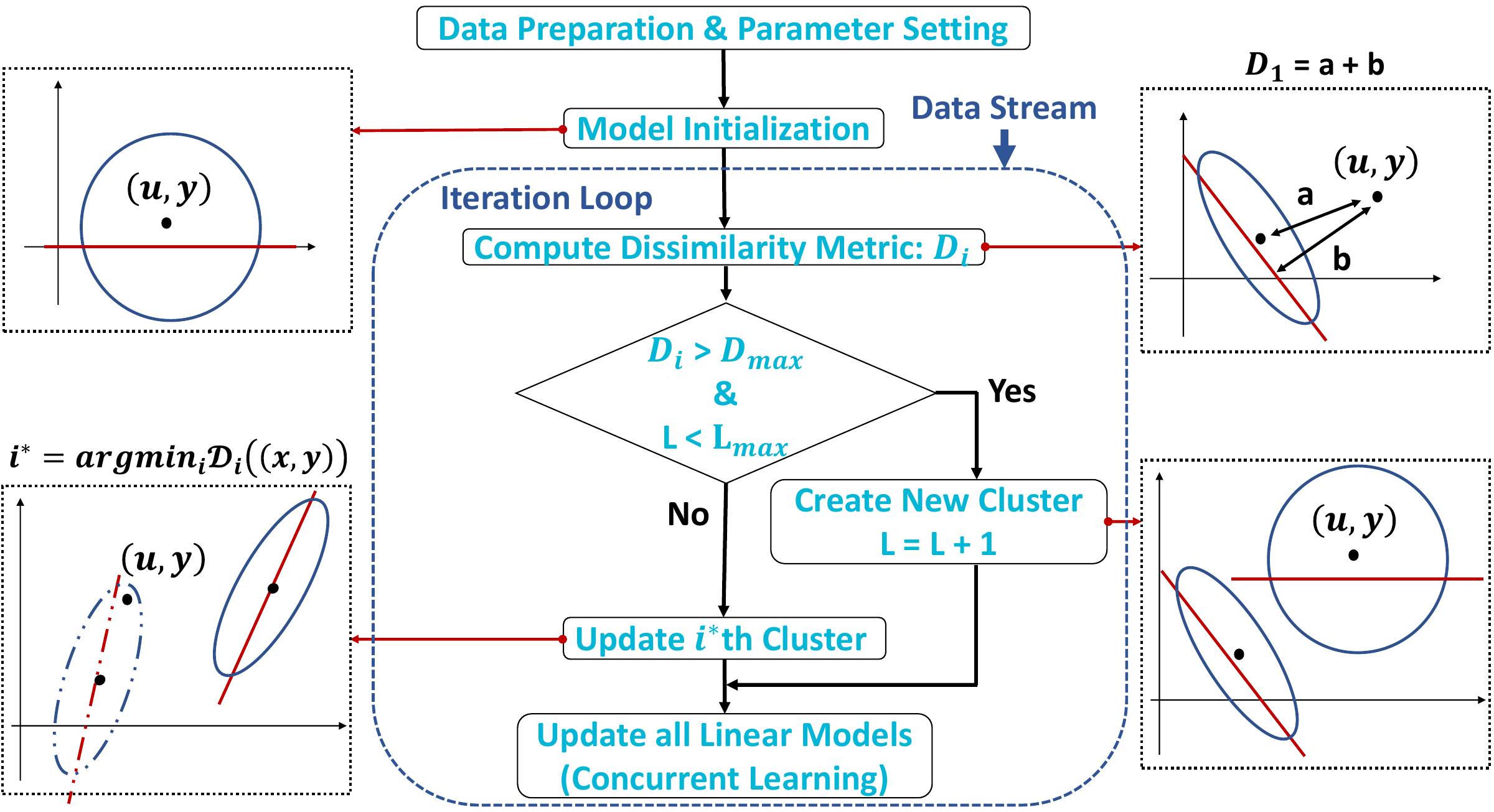}
     \caption{Sequence of steps of discrete-time STF-based Concurrent Learning: i) First cluster-model is initialized for the first I/O data, ii) For next data, the dissimilarity metric is computed to either create a new cluster-model or update one of the existing ones, iii) For each iteration, all linear models are updated using the concurrent learning, and iv) The process is repeated until all training I/O data is used. }
     \label{STF Flowchart}
\end{figure}

Fig. \ref{STF Flowchart} presents a flowchart that describes the steps of the discrete-time STF-based concurrent learning, and the readers are referred to see Algorithms 1 and 2 in \cite{chen2020online} for more details about the sequence of the steps of the STF framework. It is worth noting that we have modified the Algorithm 2 in \cite{chen2020online} such that we use the concurrent learning instead of RLS to remove the PE requirement.

\section{Simulation Results}
\label{Sec4}
In this section, we illustrate the efficacy of the developed safe data-driven predictive control through two simulation examples, i.e., a cart-inverted pendulum and a gasoline engine controls. The cart-inverted pendulum has a known model, where we use it for the NMPC to compare its result with the safe data-driven predictive control. However, there is not any known model for the gasoline engine vehicle; thus, we collect data from the system to identify a nominal model for the NMPC and compare its result with the proposed controller.

\subsection{Cart-Inverted Pendulum}
The cart-inverted pendulum is modeled as (see Fig. \ref{Cart-inverted pendulum fig}):
\begin{equation}
  \begin{aligned}
    \label{Cart-inverted pendulum}
    &\ddot{z}=\frac{F-{K}_{d}\dot{z}-{m}_{pend}(L{\dot{\theta }}^{2} \sin (\theta )-g\sin (\theta ) \cos (\theta ))}{{m}_{cart}+{m}_{pend}{\sin }^{2}(\theta )},\\ 
    &\ddot{\theta }=\frac{g\sin (\theta ) + \ddot{z}\cos (\theta )}{L},
  \end{aligned}
\end{equation}
where $z$ is the cart position, and $\theta$ represents the pendulum angle. ${m}_{pend}=1kg$ is the pendulum mass, $L=2m$ indicates the length of the pendulum, ${m}_{cart}=5kg$ denotes the cart mass, ${K}_{d}=10Ns/m$ is the damping parameter, and $g=9.81m/s^{2}$ represents the gravity acceleration. The force $F$ is the control input, and $T=0.1s$ is the sampling time.

Now, the safety constraint and the input constraint are expressed for the model \eqref{Cart-inverted pendulum} as
\begin{equation*}
  \begin{aligned}
    \label{pendulum's safety}
    	-5\le z\le 5,
  \end{aligned}
\end{equation*}
\begin{equation*}
  \begin{aligned}
    \label{pendulum's constraints}
    	-100\le F\le 100.
  \end{aligned}
\end{equation*}
where using \eqref{Safe set} and \eqref{Mean Value}, the CBF and the RCBF are
\begin{equation*}
  \begin{aligned}
    \label{pendulum's CBF}
    	h(x)=\left\| x \right\|-0.5,
  \end{aligned}
\end{equation*}
\begin{equation*}
  \begin{aligned}
    \label{pendulum's RCBF}
    	{h}_{r}(\hat{x})=\left\| \hat{x} \right\|-0.5+\eta (\varepsilon_w + \varepsilon_s + \varepsilon_c),
  \end{aligned}
\end{equation*}
where $\eta=1$, $\varepsilon_w=0.01$, $\varepsilon_s=0.01$, and $\varepsilon_c=0.01$ are considered for the RCBF. It should be mentioned that the RCBF is only applied on $z$.

\begin{figure}[!h]
     \centering
     \includegraphics[width=0.59\linewidth]{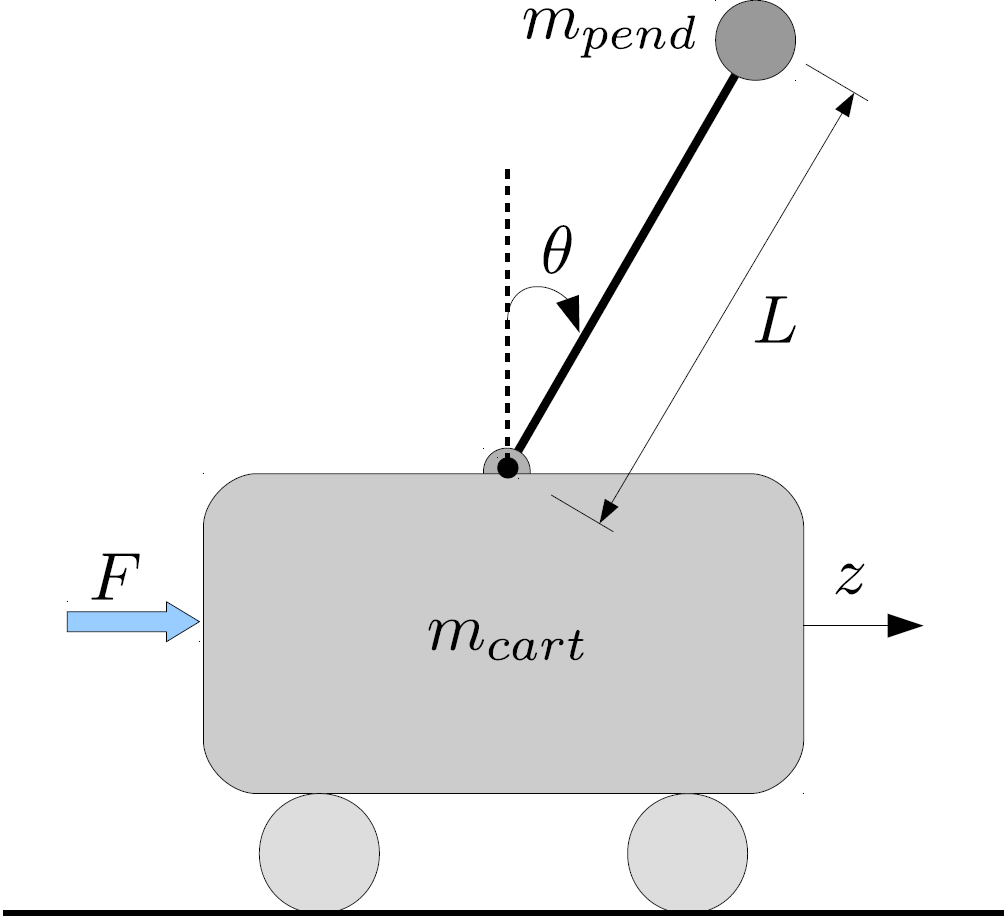}
    \caption{Cart-inverted pendulum.}
     \label{Cart-inverted pendulum fig}
 \end{figure}
 
The online data-driven safe predictive control is applied to the cart-inverted pendulum \eqref{Cart-inverted pendulum}, which yields 1) 97.54\% accuracy (validation performance) for the system identification part with 27000 training data, 3000 validation data, and 10 clusters, and 2) 98.66\% accuracy for the control policy learning part with 90000 training data, 10000 validation data, and 20 clusters. We collect the data set under random inputs/states for the system identification/policy learning. We have compared the performance of our proposed online data-driven STF controller with tube-based NMPC \cite{mayne2011tube} which is one of the state-of-the-art robust NMPC schemes to handle model uncertainties. The tube-based NMPC consists of a nominal controller that generates a central path and an ancillary controller that endeavours to steer the trajectories of the uncertain system to the central path. Using an NMPC that minimizes the cost of the deviation between the trajectories of the real system and the nominal system, the ancillary controller maintains the state of the real system in a tube whose centre is the trajectory of the nominal system. At each time step, the controller solves two optimal control problems, one which solves a standard problem for the nominal system with tightened distance constraints (the solution of which defines a central path) and an ancillary problem which keeps the actual trajectories close to the nominal trajectory.

For the nonlinear system \eqref{Cart-inverted pendulum} under an external disturbance ${w}(k)=-0.1+0.2\times rand(k)$, Fig. \ref{Cart-inverted pendulum Control input} indicates the signal $F$ for the RCBF-based NMPC, the tube-based NMPC, the offline safe data-driven predictive control, and the online safe data-driven predictive control with four future state predictions $N=4$. For this example, the RCBF-based NMPC \eqref{RCBF-based NMPC} and the tube-based NMPC work with the well-known model \eqref{Cart-inverted pendulum}, and we adapt the RCBF-based NMPC using the feedback controller \eqref{Online Adaptation} to minimize the performance loss caused by the considered external disturbance ${w}(k)$. Moreover, one can see that the proposed online adaptive control policy corrects the offline safe data-driven predictive control and approximates the policy \eqref{RCBF-based NMPC} better. From Figs. \ref{Cart-inverted pendulum Control input} and \ref{Cart-inverted pendulum Outputs}, one can see that Condition 2 is satisfied such that the considered nonlinear system converges to an equilibrium point $(x^{\prime}_e,u^{\prime}_e)$ using the offline safe data-driven predictive control; however, it is not same as the equilibrium point $(x_e,u_e)$ obtained by the policy \eqref{RCBF-based NMPC}. Therefore, the efficacy of the online adaptive control policy is clearly demonstrated so that it makes same equilibrium point $(x_e,u_e)$ for the real system. From Fig. \ref{Cart-inverted pendulum Outputs}, one can see that the tube-based NMPC achieves the system safety in the presence of the external disturbance ${w}(k)$; however, it reaches the constraint due to considering the distance constraint in the optimization problem. On the other hand, the RCBF-based NMPC and the online adaptive control policy keep the system far from the unsafe region and provide safer behavior for the real system. It is worth noting that although the offline safe data-driven predictive control shows performance loss due to the model uncertainties, it also satisfies the system safety.

\begin{figure}[!h]
     \centering
     \includegraphics[width=0.99\linewidth]{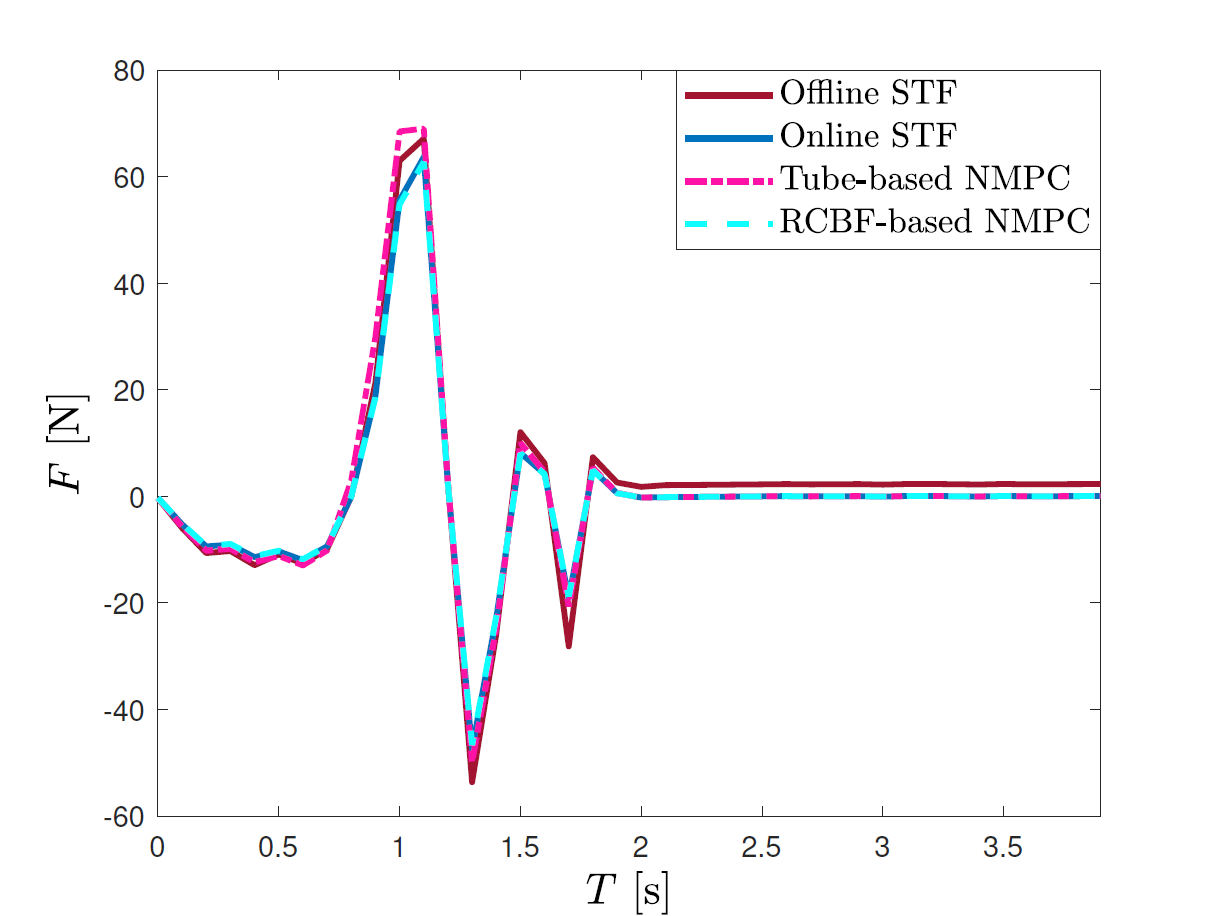}
    \caption{Control input of cart-inverted pendulum.}
     \label{Cart-inverted pendulum Control input}
 \end{figure}
 
 %\begin{figure}[ht]
 %    \centering
  %   \includegraphics[width=8.4cm, height=6.36cm]{Figs/StatesPendulum.eps}
%    \caption{States of cart-inverted pendulum using RCBF-based NMPC and data-driven safe predictive control.}
%    \label{Cart-inverted pendulum States}
% \end{figure}
 
 \begin{figure}[!h]
     \centering
     \includegraphics[width=0.99\linewidth]{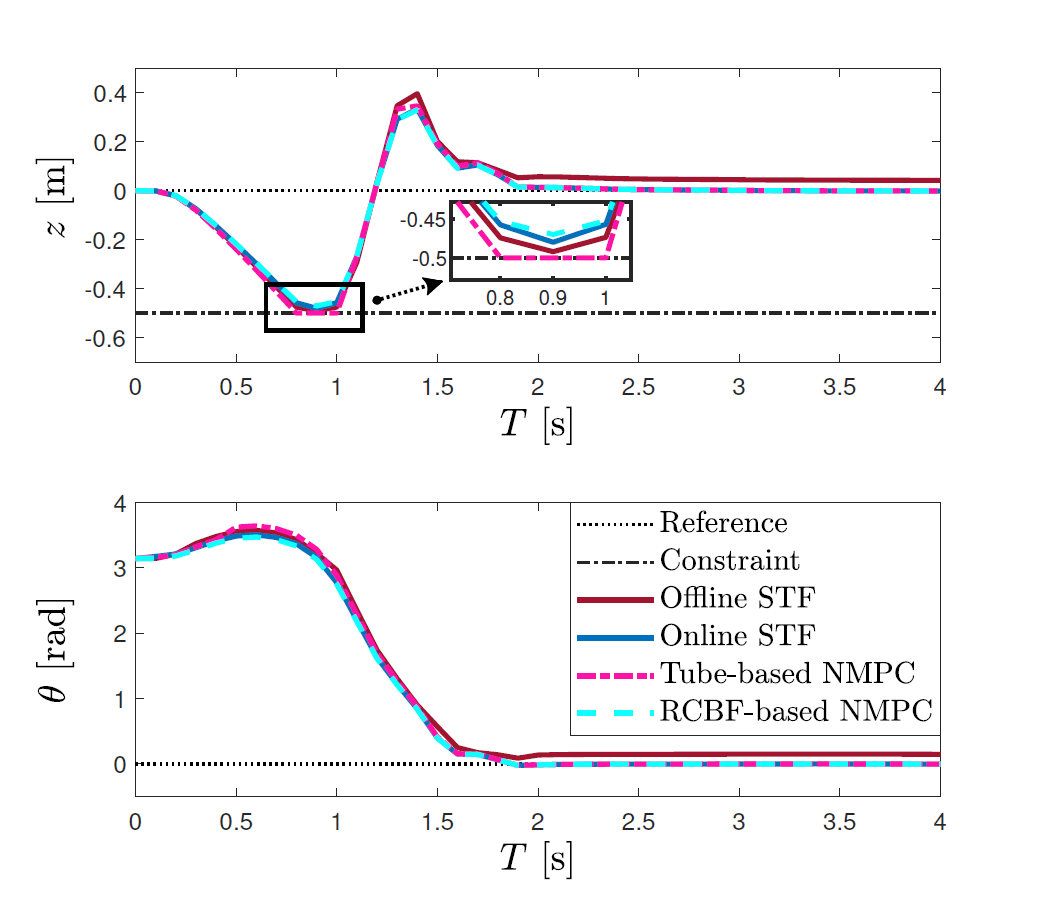}
    \caption{Outputs of cart-inverted pendulum.}
     \label{Cart-inverted pendulum Outputs}
 \end{figure}
 
  \begin{figure}[!h]
     \centering
     \includegraphics[width=0.95\linewidth]{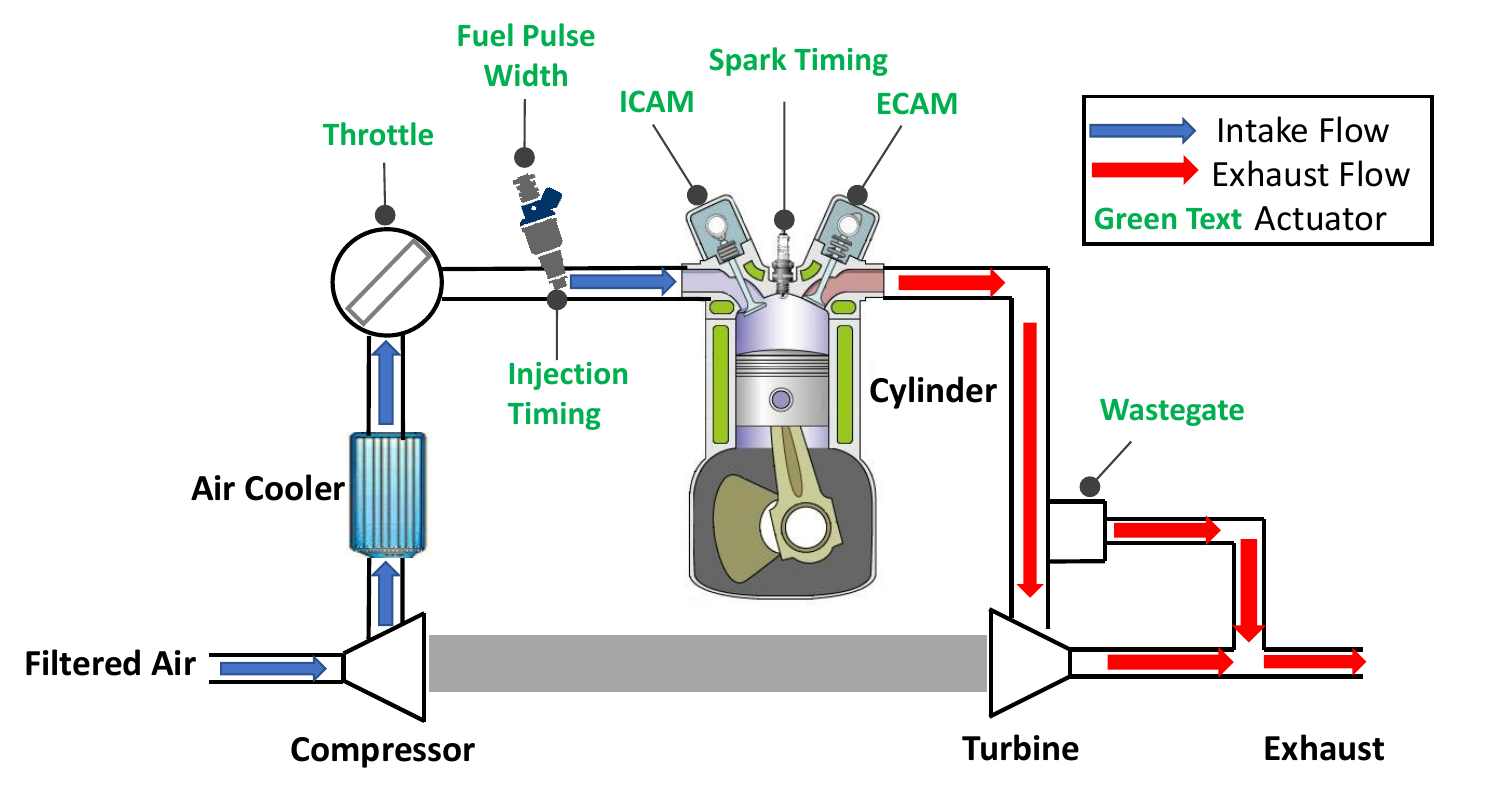}
    \caption{Turbocharged internal combustion engine.}
     \label{Turbocharged internal combustion engine}
 \end{figure}

\subsection{Turbocharged Internal Combustion Engine}
In this subsection, we employ the proposed control framework for a turbocharged internal combustion engine (see Fig. \ref{Turbocharged internal combustion engine}) \cite{chen2020online}. For this system, we have four control inputs as throttle position, intake cam position, exhaust cam position, and spark timing. %The throttle and wastegate control the air charge to the engine intake manifold. Specifically, the throttle controls the airflow to the intake manifold, with larger throttle opening leading to more air into manifold. The wastegate controls the bypass exhaust flow to regulate the turbine speed, which subsequently regulates the boost pressure. The smaller opening of wastegate valve, the more exhaust flow driving the turbine, and the higher the boost pressure, and therefore, the more power the engine can generate. The ICAM and ECAM are used to control the gas exchange between the intake and exhaust with the cylinders and have significant and complicated impacts on the engine operations. Other than air charge, another important ingredient to generate engine torque is fuel. The fuel can be injected into the intake manifold or can be injected directly into the combustion chamber. The mass ratio of the air over fuel, so-called air-to-fuel ratio, is critical to the combustion process, which affects the engine efficiency and emission. The amount of fuel injection is controlled by the fuel pulsewidth and fuel pressure. The fuel pulsewidth is defined as the amount of time that the fuel injector stays open during an intake cycle. Higher fuel pulsewidth corresponds to longer fuel injection and consequently more fuel. The injection timing is the piston position at which the injector starts to inject fuel. It also has a great impact on the engine operation, but the relationship between power and emission is challenging to quantify. Another important factor in engine combustion is the spark timing, which refers to the piston at which the spark ignites the mixed air and fuel.
Moreover, the system outputs are considered as the vehicle speed and the fuel consumption rate. Obviously, the considered engine system represents a complex nonlinear input–output relationship, which makes the system identification challenging. For this case, we aim at minimizing the fuel consumption rate and tracking a desired reference trajectory for the vehicle speed. The online data-driven safe predictive control is applied to the turbocharged internal combustion engine based on experimental data collected by Ford Motor Company on their engine platform, which yields 1) 96.12\% accuracy for the system identification part with 27000 training data, 3000 validation data, and 20 clusters, and 2) 98.05\% accuracy for the control policy learning part with 90000 training data, 10000 validation data, and 20 clusters. We collect the data set under multiple short input trajectories and random states for the system identification and policy learning, respectively.

%According to Algorithm \ref{STF}, the online data-driven safe predictive control is applied to the turbocharged internal combustion engine based on data collected from a high-fidelity simulator provided by Ford, which yields 1) 96.12\% accuracy for 30000 training data with 20 clusters and local linear models in the offline system identification, and 2) 98.05\% accuracy for 100000 training data with 20 clusters and local linear models in the offline RCBF-based NMPC learning.

For this case, the RCBF-based NMPC policy \eqref{RCBF-based NMPC} works with the trained model obtained by the STF-based concurrent learning, and we adapt it using the feedback controller in \eqref{Online Adaptation} to minimize the performance loss caused by the system identification error and the unknown disturbance for the real system. Considering four future state predictions $N=4$, Fig. \ref{Turbocharged internal combustion engine Control Inputs} shows the control input signals for the turbocharged internal combustion engine. Like the previous example, one can see that the proposed online adaptive control policy corrects the offline safe data-driven predictive control and learns the policy \eqref{RCBF-based NMPC} better. Fig. \ref{Turbocharged internal combustion engine Outputs} indicates the outputs of the turbocharged internal combustion engine, where the fuel consumption is minimized, and the vehicle speed tracks the desired reference while it satisfies the constraint. Clearly, the offline safe data-driven predictive control causes a performance loss for the real system; however, the online safe data-driven predictive control minimizes the performance loss by removing the KKT deviations caused by the control learning error and state perturbations caused by the system identification error and unknown disturbance. Fig.~\ref{Number of clusters and local models} shows the distribution of the identification performance along the number of clusters and local models for the turbocharged internal combustion engine. As it is obvious from Fig. \ref{Number of clusters and local models}, there is no major change on the performance after 20 clusters; thus, we have considered this number for the engine vehicle identification. Moreover, Fig. \ref{Delay} illustrates the distribution of the identification performance along the input delay $d_u$ and the output delay $d_y$ for the turbocharged internal combustion engine. One can see that $d_u=3,d_y=2$ makes the best identification performance for the engine vehicle, where these values are considered for the engine vehicle identification. In Figs. 10 and 11, validation performance demonstrates the accuracy of the identified model (trained by training data) to act as the real system, which is analyzed using validation data. Figs. 10 and 11 illustrate the validation performance when training the model with different numbers of clusters and input and output delays, respectively. In Fig. 11, to evaluate the validation performance for different input and output delays, we collect multiple short input-output trajectories as the training data set and consider the delay terms in the basis function of the trained model.

\begin{figure}[!h]
     \centering
     \includegraphics[width=0.99\linewidth]{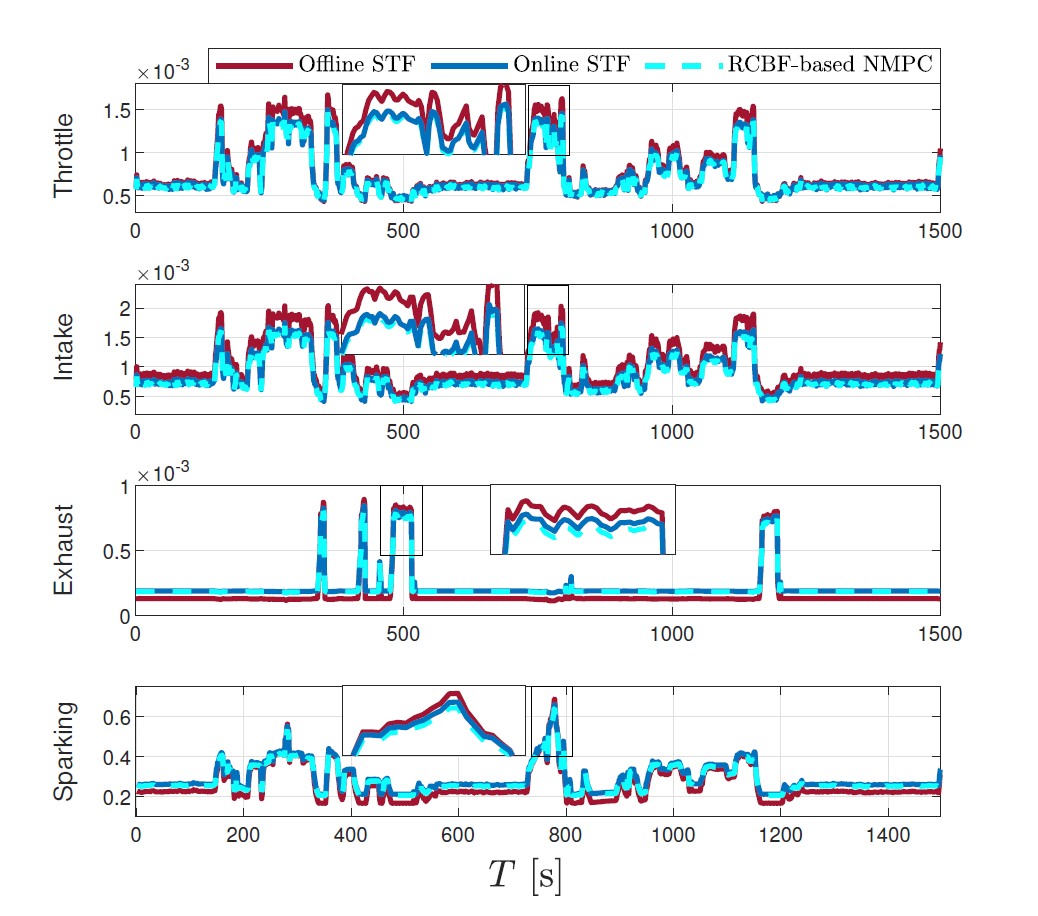}
    \caption{Control inputs of engine vehicle.}
     \label{Turbocharged internal combustion engine Control Inputs}
 \end{figure}

 \begin{figure}[!h]
     \centering
     \includegraphics[width=0.99\linewidth]{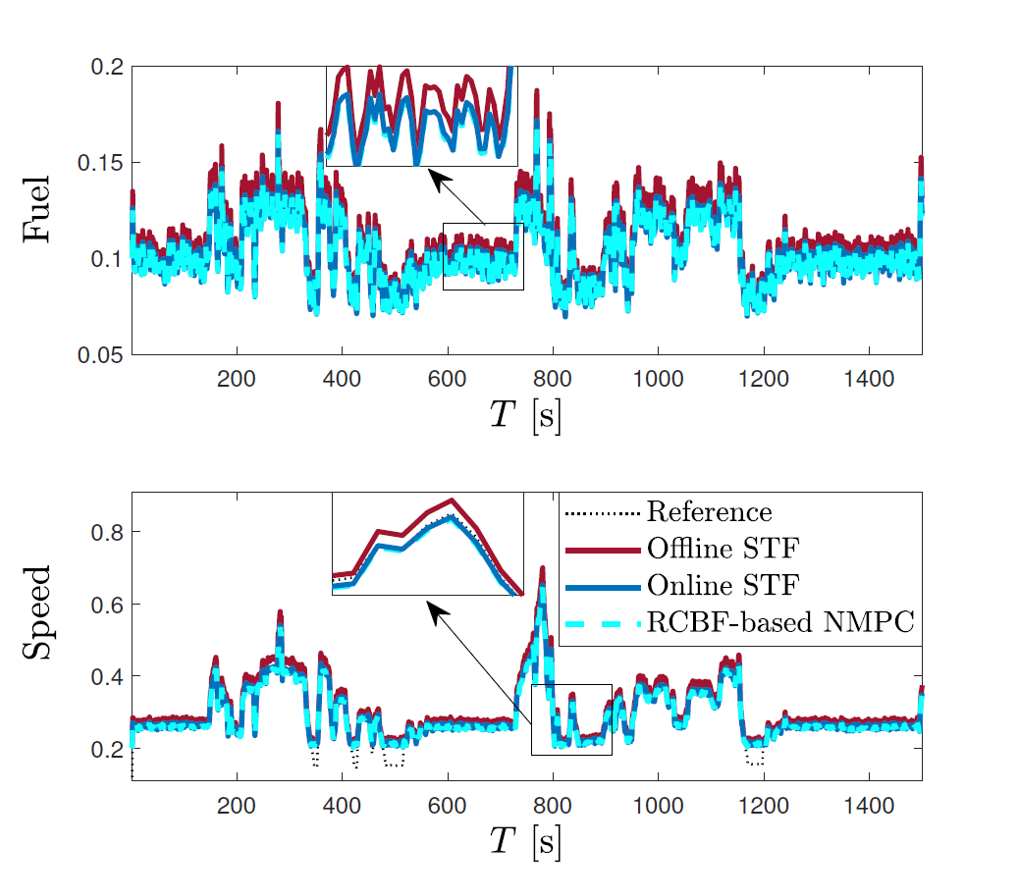}
    \caption{Outputs of engine vehicle.}
     \label{Turbocharged internal combustion engine Outputs}
 \end{figure}
 
  \begin{figure}[!h]
     \centering
     \includegraphics[width=0.99\linewidth]{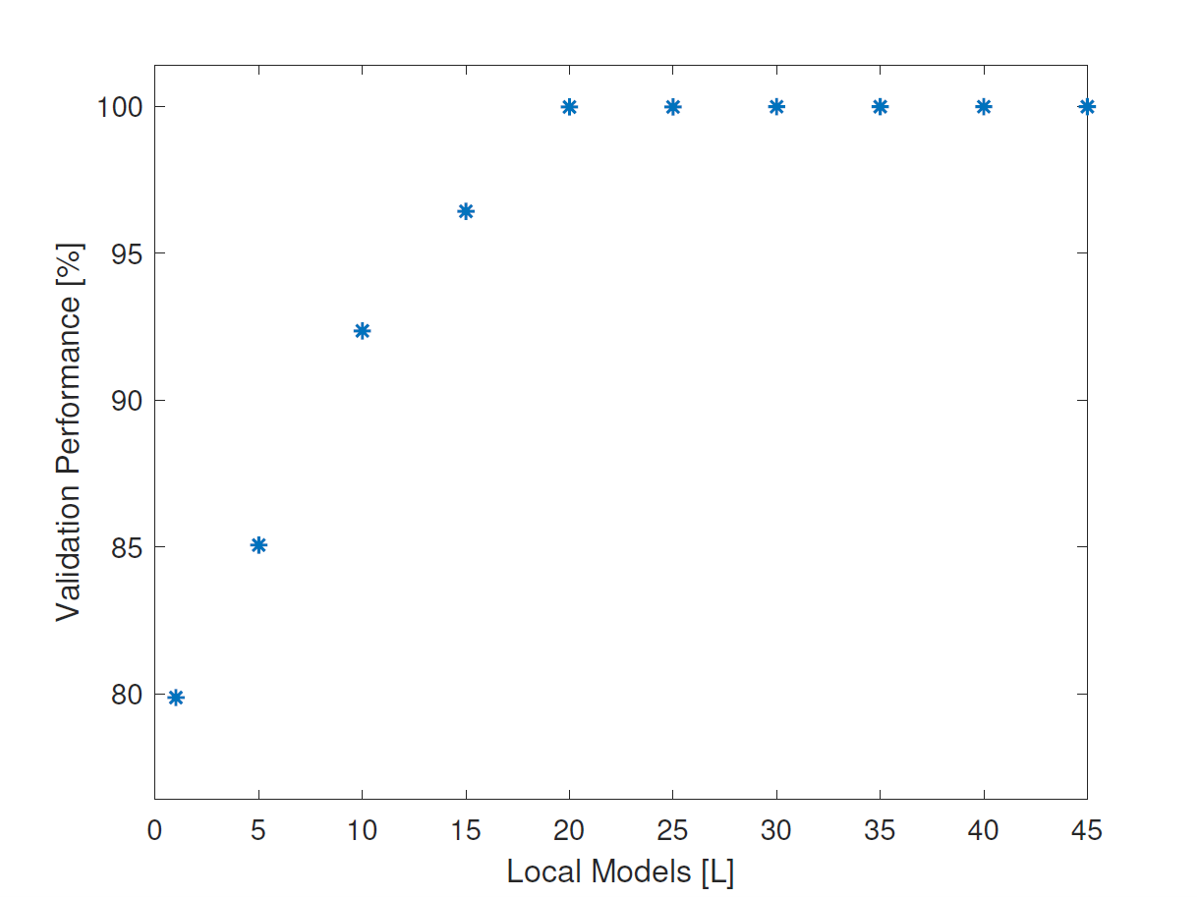}
    \caption{Number of clusters and local models for engine vehicle identification.}
     \label{Number of clusters and local models}
 \end{figure}
 
 \vspace{3 mm}
 
   \begin{figure}[!h]
     \centering
     \includegraphics[width=0.99\linewidth]{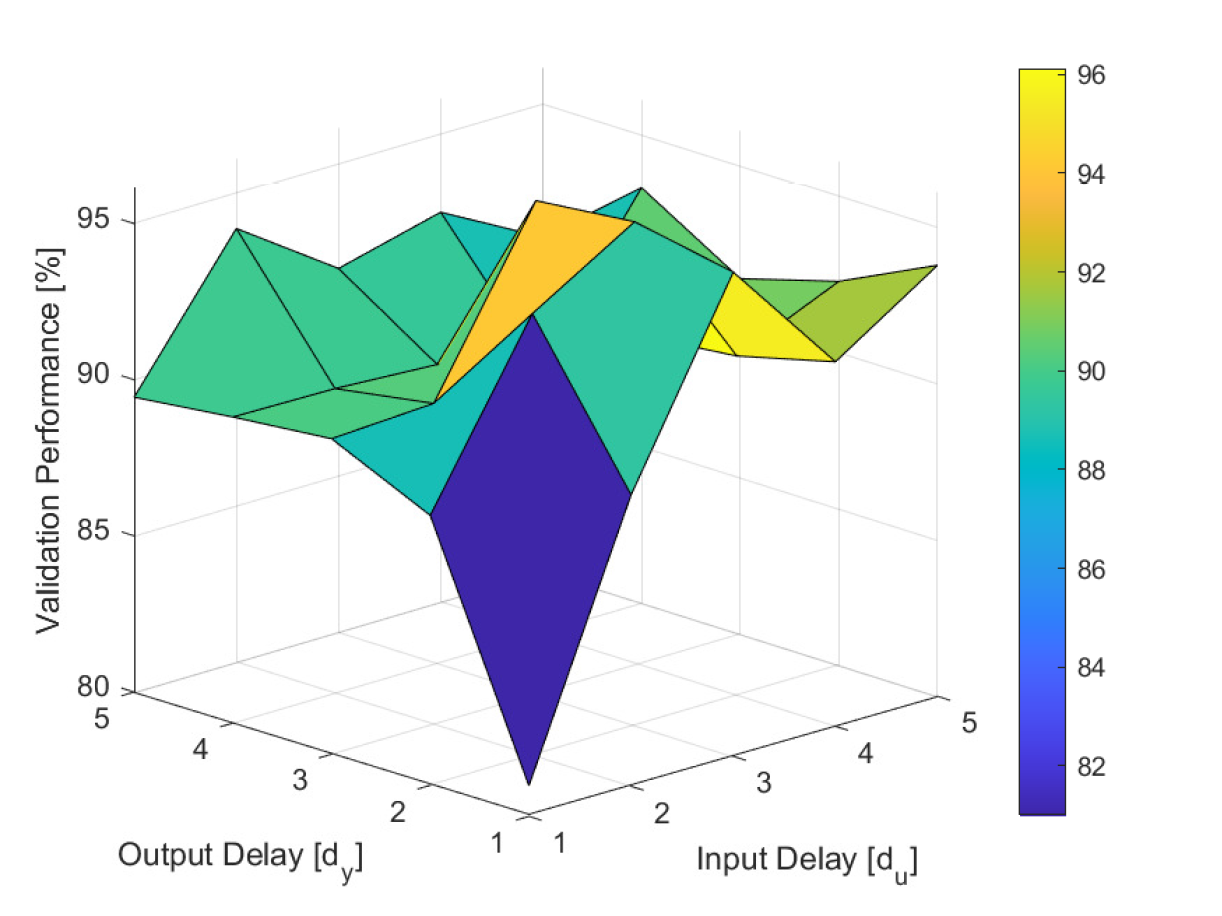}
    \caption{Input delay and output delay for engine vehicle identification.}
     \label{Delay}
 \end{figure} 

For the offline part, Table II presents the validation performance and computational cost of the STF-based concurrent learning in comparison with the NNs and the GPR for the engine vehicle identification, and Table III demonstrates the same task for the RCBF-based NMPC policy learning. We evaluate the validation performance of each function approximator using the best fitting rate (BFR), which analyzes the goodness of fit between the validation data (i.e., measured data) and the simulated output of the trained model (or trained policy) based on the normalized root mean squared error (NRMSE). Thus, the performance represents $(1-\frac{\lVert y - \hat{y} \rVert}{\lVert y - mean(y) \rVert}) \times 100 \hspace{1 mm} \%$ and $(1-\frac{\lVert u_{mpc} - \tilde{u} \rVert}{\lVert u_{mpc} - mean(u_{mpc}) \rVert}) \times 100 \hspace{1 mm} \%$ for Tables II and III, respectively, and 100\% performance means that the simulated output from the trained model (or trained policy) is perfectly matched with the measured data. It is worth noting that the measured data for Tables II and III are the collected system output from the engine vehicle and the collected control input from the RCBF-based NMPC, respectively. As shown in Tables II and III, the STF-based concurrent learning effectively reduces the computational cost of the learning process while it shows high performance compared to the NNs and the GPR. Moreover, for the online part, the online safe data-driven predictive control is compared with the policy \eqref{RCBF-based NMPC} for various future predictions $N$ in Table IV. This table demonstrates the tracking performance of each controller using the NRMSE between the desired reference trajectory and the system output. Thus, the tracking performance represents $(1-\frac{\lVert r - y \rVert}{\lVert r - mean(r) \rVert}) \times 100 \hspace{1 mm} \%$, and 100\% performance means that the system output is perfectly matched with the desired reference trajectory. Obviously, the online safe data-driven predictive control provides high performance to track the desired reference trajectory while it effectively reduces the computational cost of the NMPC.

\vspace{3 mm}
\begin{table}[!ht]
\centering
 \caption{Comparison of Performance and Computational Cost for System Identification}
\begin{tabular}{ |p{3.4cm}|p{1.7cm}|p{2.2cm}|  }
\hline\hline
Function Approximator & Performance & Time (per loop) \\
\hline
NNs & $96.15\%$ & $32.8 \hspace{1 mm} ms$ \\\hline
GPR & $96.48\%$ & $99.3 \hspace{1 mm} ms$ \\\hline
STF & $96.12\%$ & $04.3 \hspace{1 mm} ms$ \\\hline
\hline
\end{tabular}
\end{table}
 
\vspace{3 pt}
\begin{table}[!ht]
\centering
 \caption{Comparison of Performance and Computational Cost for NMPC Policy Learning}
\begin{tabular}{ |p{3.4cm}|p{1.7cm}|p{2.2cm}|  }
\hline
\hline
Function Approximator & Performance & Time (per loop) \\
\hline
NNs & $98.12\%$ & $19.4 \hspace{1 mm} ms$ \\\hline
GPR & $98.57\%$ & $58.1 \hspace{1 mm} ms$ \\\hline
STF & $98.05\%$ & $03.5 \hspace{1 mm} ms$ \\\hline
\hline
\end{tabular}
\end{table}

\begin{table}[!ht]
\centering
 \caption{Comparison of Performance and Computational Cost for Different Controllers}
\begin{tabular}{ |p{3.4cm}|p{1.7cm}|p{2.2cm}|  }
\hline\hline
Control Policy & Performance & Time (per loop) \\
\hline
NMPC (N=1) & $98.76\%$ & $05.7 \hspace{1 mm} ms$ \\\hline
NMPC (N=4) & $99.23\%$ & $25.2 \hspace{1 mm} ms$ \\\hline
NMPC (N=8) & $99.86\%$ & $68.1 \hspace{1 mm} ms$ \\\hline
Online STF (N=1) & $98.66\%$ & $66.2 \hspace{1 mm} \mu s$ \\\hline
Online STF (N=4) & $99.15\%$ & $78.9 \hspace{1 mm} \mu s$ \\\hline
Online STF (N=8) & $99.79\%$ & $90.8 \hspace{1 mm} \mu s$ \\\hline
\hline
\end{tabular}
\end{table}

\section{Conclusions}
\label{Sec5}
In general, safety refers to the system's ability to operate without causing hazardous conditions, physical damage, or harm to humans, infrastructure, or the environment (such as collisions in autonomous vehicles). A safe controller ensures that operational constraints (e.g., speed, temperature, pressure, and voltage) remain within acceptable limits to prevent failures or dangerous consequences. MPC is a widely used optimal control strategy that accounts for system safety; however, MPC can be vulnerable to model learning errors, disturbances, and cyber-attacks. In \cite{yang2023resilient}, a resilient MPC framework is proposed, which not only resists disturbances but also detects cyber-attacks (e.g., false data injection) and reconfigures MPC actions to maintain safety. However, in our paper, we discussed that MPC can drive the system too close to unsafe regions, which is an undesirable behavior. To address this, we enhance the safety properties of MPC by introducing the CBF-based MPC, which ensures the system remains farther from unsafe regions. Furthermore, by incorporating the robust CBF-based MPC, we strengthen our safe control framework to handle model learning errors and disturbances more effectively. It is important to note that in this paper, we do not consider cyber-attacks, as our primary focus is on developing a safe data-driven predictive control framework that maintains system safety in the presence of model learning errors and disturbances. A resilient extension would be required when security threats are a concern. To summarize, we proposed a safe data-driven predictive control framework, which includes i) a discrete-time STF-based concurrent learning for system identification and control policy learning, ii) a RCBF-based NMPC policy, and iii) an online adaptation law based on KKT sensitivity analysis and feedback control. The proposed control framework was employed for the cart-inverted pendulum as well as the automotive engine with promising results demonstrated. %The primary focus of this study revolves around the introduction of the novel framework and the subsequent derivation and analysis of control laws. The simulations primarily serve the purpose of showcasing the efficacy of the proposed framework, highlighting its ability to achieve commendable performance with minimal online computational demands when contrasted with the NMPC method. The inclusion of the well-known cart-inverted pendulum system in the simulations is a common choice, often employed as a standard benchmark for evaluating nonlinear control strategies. Moreover, 
For the automotive engine, we have experimentally collected the I/O data from the turbocharged internal combustion engine and identified a nominal model for the system using the collected data. The proposed safe data-driven predictive control is applied on the obtained identified model and demonstrates a reasonable performance as shown in the simulation results. We acknowledge a few limitations of our method as follows. Like the NNs and the GPR, the STF framework requires comprehensive data collection to ensure an adequate coverage of operating conditions. Moreover, in the control design, a bounded external disturbance was assumed. Future work will include addressing the mentioned shortcomings by exploring a finite sample approach to reduce the required collected data for the STF, extending the control framework with more general unbounded stochastic disturbances, and carrying out a formal discussion on the recursive feasibility of the optimization problem (e.g. with an iteration approach).

\begin{appendices}
\section{(Transferring STF Model to State-Space Model)}
Considering \eqref{NARX Model}-\eqref{local models}, the input vector of the STF function approximator, i.e., ${U}_{stf}(k+1)$, can be written in the format of $[u(k);x(k)]$ as
\begin{equation*}
  \begin{aligned}
    \label{local models state}
  &{f}_{i}(k+1) = {A}_{i}{U}_{stf}(k+1)+{b}_{i}+\omega_i(k+1) \\ 
  & \hspace{13.5 mm} = A_{i_2} x(k)+{A}_{i_1}u(k)+{b}_{i}+\omega_i(k+1),\\
  \end{aligned}
\end{equation*}
where ${A}_{i_1}$ is the first element of ${A}_{i}$, and $A_{i_2}$ represents the rest of the elements, i.e.,
\begin{equation*}
\small
  \begin{aligned}
    \label{parameters state}
    &{A}_{i} = [a_{i_1}, a_{i_2}, a_{i_3}, \ldots, a_{i_{d_u+d_y}}],\\
  &A_{i_1} = a_{i_1},{A}_{i_2} = [a_{i_2}, a_{i_3}, \ldots, a_{i_{d_u+d_y}}],\\
  &x(k) = [u(k-1); \ldots; u(k-d_{u}+1);y(k); \ldots; y(k-d_{y}+1)].
  \end{aligned}
\end{equation*}
where $x(k)$ is considered as the states of the system, and $u(k)$ represents the control input.

Now, the nonlinear model \eqref{Composite Structure} is expressed as
\begin{equation*}
\small
  \begin{aligned}
    \label{Composite Structure state}
  y(k+1)=\sum\limits_{i=1}^{L} {\alpha }_{i}(k+1) (A_{i_2} x(k)+{A}_{i_1}u(k)+{b}_{i}+\omega_i(k+1)), 
  \end{aligned}
\end{equation*}
where ${\alpha }_{i}(k+1)$ represents ${\alpha }_{i}([u(k);x(k)],{\psi }_{i})$.

Using \eqref{parameters state} and \eqref{Composite Structure state}, the state-space model \eqref{system} is given as follows:
\begin{equation*}
  \begin{aligned}
    \label{state-space}
    & x(k+1) = f(x(k),u(k))+w(k),\\
    & \hspace{13 mm} = A_{t_2}(k) x(k)+{A}_{t_1}(k) u(k)+{b}_{t}(k)+w(k),
  \end{aligned}
\end{equation*}
where $A_{t_2}$, ${A}_{t_1}$, and ${b}_{t}$ are nonlinear matrices as
\begin{equation*}
  \begin{aligned}
    \label{state-space matrices}
    & A_{t_2}(k)=\begin{bmatrix} 
    0_{n_u \times n_u(d_u -1)} & 0_{n_u \times n_y(d_y)}\\ 
    \tau_1 & 0_{n_u(d_u -2) \times n_y(d_y)}\\
    \rho_1(k) & \rho_2(k)\\
    0_{n_y(d_y -1) \times n_u(d_u-1)} & \tau_2
    \end{bmatrix},\\
    & A_{t_1}(k)=\begin{bmatrix} 
    I_{n_u}\\ 
    0_{n_u(d_u -2) \times n_u}\\
    \sum\limits_{i=1}^{L} {\alpha }_{i}([u(k);x(k)],{\psi }_{i}) {A}_{i_1},\\
    0_{n_y(d_y -1) \times n_u}
    \end{bmatrix},\\
    & b_{t}(k)= \begin{bmatrix} 
    0_{n_u \times 1}\\ 
    0_{n_u(d_u -2) \times 1}\\
    \sum\limits_{i=1}^{L} {\alpha }_{i}([u(k);x(k)],{\psi }_{i}) b_i,\\
    0_{n_y(d_y -1) \times 1}
    \end{bmatrix},\\
    & w(k)= \begin{bmatrix} 
    0_{n_u \times 1}\\ 
    0_{n_u(d_u -2) \times 1}\\
    \sum\limits_{i=1}^{L} {\alpha }_{i}([u(k);x(k)],{\psi }_{i}) \omega_i(k+1),\\
    0_{n_y(d_y -1) \times 1}
    \end{bmatrix},\\
  \end{aligned}
\end{equation*}
where
\begin{equation*}
  \begin{aligned}
    \label{state-space terms}
    & \tau_1 = [I_{n_u(d_u -2)} \hspace{5 mm} 0_{n_u(d_u -2) \times n_u}],\\
    & \tau_2 = [I_{n_y(d_y -1)} \hspace{5 mm} 0_{n_y(d_y -1) \times n_y}],\\
    & \rho_1(k) = \sum\limits_{i=1}^{L} {\alpha }_{i}([u(k);x(k)],{\psi }_{i}) [a_{i_2}, \ldots, a_{i_{d_u}}],\\
    & \rho_2(k) = \sum\limits_{i=1}^{L} {\alpha }_{i}([u(k);x(k)],{\psi }_{i}) [a_{i_{d_u+1}}, \ldots, a_{i_{d_u+d_y}}].
  \end{aligned}
\end{equation*}
\end{appendices}

\bibliographystyle{ieeetr}
\bibliography{References.bib}

\begin{IEEEbiography}[{\includegraphics[width=1.05in,height=1.2in]{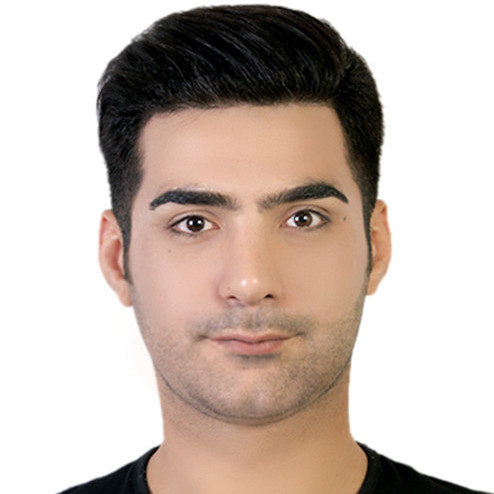}}]{Amin Vahidi-Moghaddam}
received the B.S. Degree in Mechanical Engineering from Shiraz University, Iran, in 2015, the M.S Degree in Mechanical Engineering from the University of Tehran, Iran, in 2017, and the Ph.D. degree in Mechanical Engineering from Michigan State University, USA, in 2024. He was a Researcher in the Department of Mechanical Engineering at the University of Tehran from 2017 to 2019. He is currently a Research Scientist in the Intelligent Thermal Control Team at Vitro. His current research interests include Model Predictive Control, Reinforcement Learning, Deep Learning, Optimization, Thermal Control, and Robot/Vehicle Motion Control.
\end{IEEEbiography}

\vspace{-9 mm}

\begin{IEEEbiography}[{\includegraphics[width=1.03in,height=1.25in]{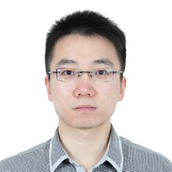}}]{Kaian Chen}
received his B.E. degree from Nantong University, China, in 2013, his M.S. degree in Electrical and Computer Engineering from The Ohio State University in 2017, and his Ph.D. degree in Mechanical Engineering from Michigan State University in 2022. He is currently a Battery Control Engineer at the Battery Energy Control Module (BECM) Team of Ford Motor Company. His work and research interests include nonlinear system identification and modeling, model predictive control, and their applications in battery energy management, health diagnostics, and charging status prediction.
\end{IEEEbiography}

\vspace{-9 mm}

\begin{IEEEbiography}[{\includegraphics[width=1.03in,height=1.25in]{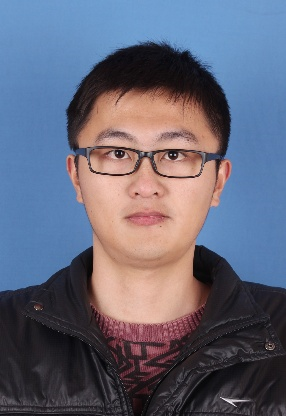}}]{Kaixiang Zhang}
received the B.E. degree in automation from Xiamen University, Xiamen, China, in 2014, and the Ph.D. degree in control science and engineering from Zhejiang University, Hangzhou, China, in 2019. From 2017 to 2018, he was a visiting doctoral student at the National Institute for Research in Computer Science and Automation, Rennes, France. He was a research associate with the Department of Mechanical Engineering at Michigan State University, where he is currently a fixed-term Assistant Professor. His research interests include visual servoing, robotics, and control theory and application.

\end{IEEEbiography}

\vspace{-9 mm}

\begin{IEEEbiography}[{\includegraphics[width=1.03in,height=1.25in]{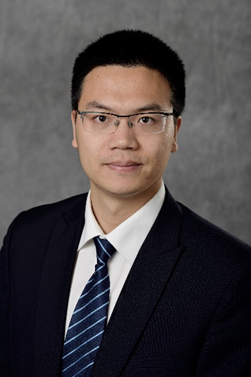}}]{Zhaojian Li}
received his B. Eng. degree from Nanjing University of Aeronautics and Astronautics in 2010. He obtained M.S. (2013) and Ph.D. (2015) in Aerospace Engineering (flight dynamics and control) at the University of Michigan, Ann Arbor. He is currently a Red Cedar Distinguished Associate Professor in the Department of Mechanical Engineering at Michigan State University. His research interests include Learning-based Control, Nonlinear and Complex Systems, and Robotics and Automated Vehicles. He is a senior member of IEEE and a recipient of the NSF CAREER Award.

\end{IEEEbiography}

\vspace{-9 mm}

\begin{IEEEbiography}[{\includegraphics[width=1.03in,height=1.25in]{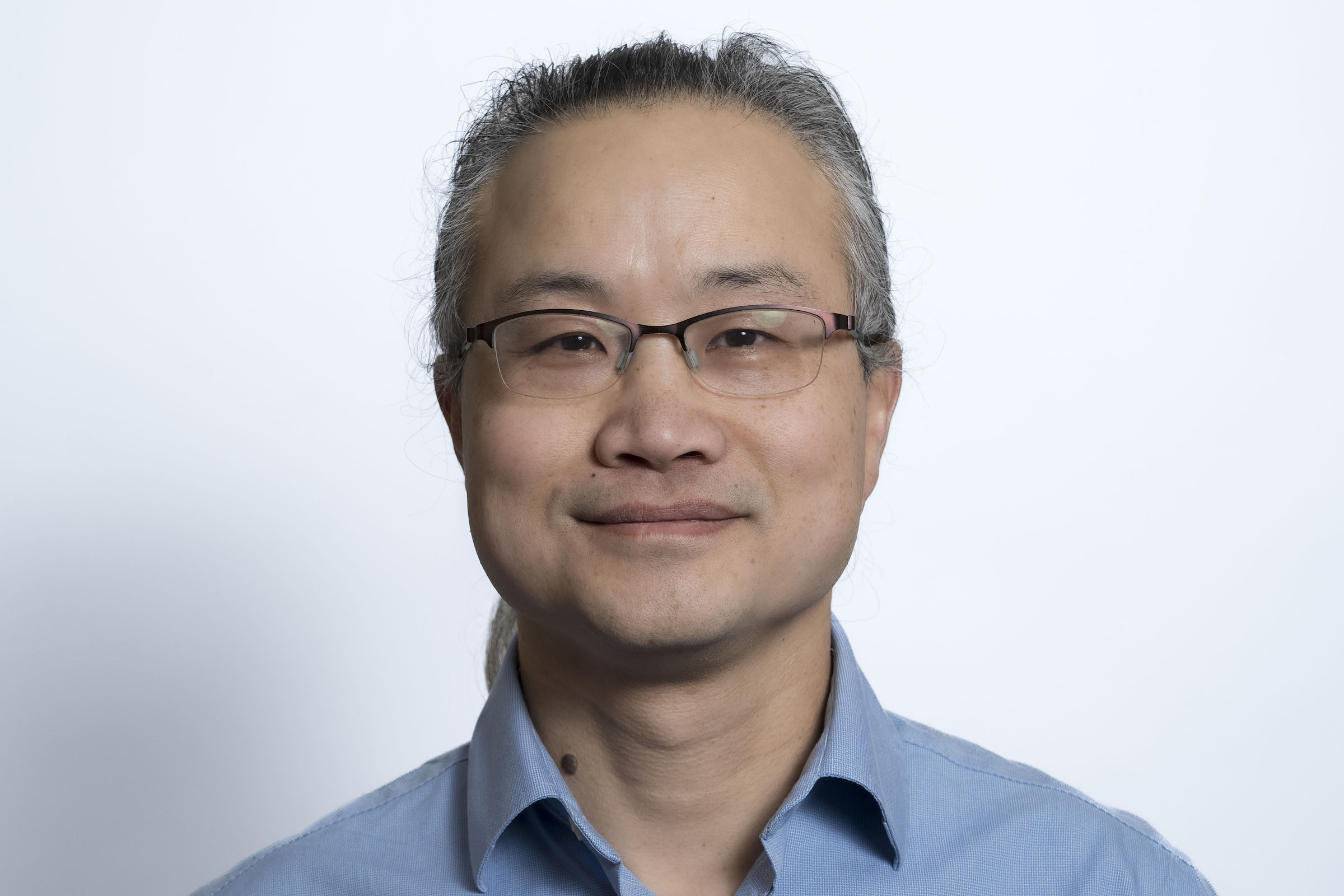}}]{Yan Wang}
received his B.S. and M.S. degrees from Tsinghua University in China, and the Ph.D. degree in Mechanical Engineering from the University of California, Santa Barbara, CA, USA. He spent 20+ years working on different automotive control problems, spanning from actuator design and mechatronics controls, to engine optimization, controls, and calibrations, and later on in data-driven control in automotive applications. His main research interests include the application and real world implementation of advanced/modern control methods, including robust control, adaptive control, system identification, optimal control, model predictive control, and connectivity enabled data analytics and machine learning, on vehicle design, control, calibration, and diagnostics/prognostics.
\end{IEEEbiography}

\vspace{-9 mm}

\begin{IEEEbiography}[{\includegraphics[width=1.03in,height=1.25in]{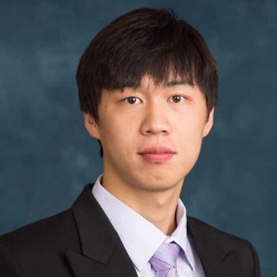}}]{Kai Wu}
received his B. Eng. degree from Tianjin University of Automation in 2011. He obtained a M.S. degree in Electrical Engineering and Computer Science (EECS) in 2013, and a Ph.D. in Mechanical Engineering in 2018, both from the University of Michigan, Ann Arbor. He is currently a Data Science Manager in Global Data Insights \& Analytics (GDI\&A) at Ford Motor Company. His work and research interests include charging and energy services, data-driven decision-making systems, and e-mobility. He is a member of IEEE and SAE and was awarded the Henry Ford Technology Award in 2024, which is the highest technology award at Ford Motor Company.
 
\end{IEEEbiography}

\vfill
 
\end{document}